%% file: main.tex
\documentclass{article}

\usepackage{microtype}
\usepackage{graphicx}
\usepackage{subcaption}
\usepackage{booktabs}
\usepackage{multirow}
\usepackage{array}
\usepackage{caption}
\usepackage{tabularx}
\usepackage{booktabs}
\usepackage{dblfloatfix}
\usepackage{placeins}
\usepackage{tabularx}
\usepackage{amsmath}
\usepackage{amssymb}
\usepackage{mathtools}
\usepackage{amsthm}
\usepackage{booktabs}
\usepackage{multirow}
\usepackage[table]{xcolor}
\usepackage{enumitem}
\usepackage{tabularx}
\usepackage{ragged2e} 
\usepackage{xurl}
\usepackage[table]{xcolor} 
\newcommand{\best}[1]{\textbf{#1}}
\newcommand{\second}[1]{\underline{#1}}

\newcommand{\third}[1]{%
  \colorbox{gray!15}{#1}%
}
\usepackage{xcolor}
\usepackage{xcolor}

\input{preamble}
\input{preamble_hhy}

\usepackage{hyperref}

\usepackage[accepted]{icml2026}

\icmltitlerunning{PASA: A Principled Embedding-Space Watermarking Approach for LLM-Generated Text under Semantic-Invariant Attacks}

\begin{document}

\twocolumn[
  \input{sec/title}
  \vspace{0.12in}
  \input{sec/teaser_fig}
  \vspace{0.12in}
]

\printAffiliationsAndNotice{
}
\input{sec/0_abstract}

\input{sec/1_introduction}
\input{sec/2_problem_formulation}

\input{sec/3_theoretical}

\input{sec/5_experiments}
\input{sec/6_conclusion}
\input{sec/7_impact_statements}


\input{main.bbl}
\newpage
\appendix
\onecolumn 
\input{sec/8_appendix}
\end{document}

%% file: preamble.tex
\usepackage{xcolor}

\newcommand{\pt}{\mathrm{pt}}
\newcommand{\vton}{\mathsf{vec2num}}
\newcommand{\seed}{\mathsf{seed}}

\mathtoolsset{showmanualtags, showonlyrefs}

%% file: preamble_hhy.tex
\usepackage[mathscr]{eucal}
\usepackage{epsfig,epsf,psfrag}
\usepackage{amssymb,amsthm,amsfonts,latexsym}
\usepackage{xcolor,url,overpic}
\usepackage{caption}
\usepackage{subcaption}
\usepackage{array}
\usepackage{verbatim}
\usepackage{bm}
\usepackage{bbm}
\usepackage{dsfont}
\usepackage{algorithmic}
\usepackage{algorithm}
\usepackage{verbatim}
\usepackage{textcomp}
\usepackage{mathrsfs}
\usepackage{epstopdf}
\usepackage{booktabs} 
\usepackage{thm-restate}
\usepackage{tikz}

\newcommand{\openone}{\leavevmode\hbox{\small1\normalsize\kern-.33em1}}

\catcode`~=11 \def\UrlSpecials{\do\~{\kern -.15em\lower .7ex\hbox{~}\kern .04em}} \catcode`~=13 

\allowdisplaybreaks[1]

\newcommand{\nn}{\nonumber}

\newcommand{\calB}{\mathcal{B}}

\newcommand{\calO}{\mathcal{O}}

\newcommand{\calV}{\mathcal{V}}

\newcommand{\calX}{\mathcal{X}}

\newcommand{\calZ}{\mathcal{Z}}

\newcommand{\rmH}{\mathrm{H}}

\newcommand{\bbE}{\mathbb{E}}

\newcommand{\bbN}{\mathbb{N}}

\DeclareMathAlphabet{\mathbsf}{OT1}{cmss}{bx}{n}
\DeclareMathAlphabet{\mathssf}{OT1}{cmss}{m}{sl}

\newcommand{\rvD}{\mathsf{D}}

\DeclareSymbolFont{bsfletters}{OT1}{cmss}{bx}{n}  
\DeclareSymbolFont{ssfletters}{OT1}{cmss}{m}{n}
\DeclareMathSymbol{\bsfGamma}{0}{bsfletters}{'000}
\DeclareMathSymbol{\ssfGamma}{0}{ssfletters}{'000}
\DeclareMathSymbol{\bsfDelta}{0}{bsfletters}{'001}
\DeclareMathSymbol{\ssfDelta}{0}{ssfletters}{'001}
\DeclareMathSymbol{\bsfTheta}{0}{bsfletters}{'002}
\DeclareMathSymbol{\ssfTheta}{0}{ssfletters}{'002}
\DeclareMathSymbol{\bsfLambda}{0}{bsfletters}{'003}
\DeclareMathSymbol{\ssfLambda}{0}{ssfletters}{'003}
\DeclareMathSymbol{\bsfXi}{0}{bsfletters}{'004}
\DeclareMathSymbol{\ssfXi}{0}{ssfletters}{'004}
\DeclareMathSymbol{\bsfPi}{0}{bsfletters}{'005}
\DeclareMathSymbol{\ssfPi}{0}{ssfletters}{'005}
\DeclareMathSymbol{\bsfSigma}{0}{bsfletters}{'006}
\DeclareMathSymbol{\ssfSigma}{0}{ssfletters}{'006}
\DeclareMathSymbol{\bsfUpsilon}{0}{bsfletters}{'007}
\DeclareMathSymbol{\ssfUpsilon}{0}{ssfletters}{'007}
\DeclareMathSymbol{\bsfPhi}{0}{bsfletters}{'010}
\DeclareMathSymbol{\ssfPhi}{0}{ssfletters}{'010}
\DeclareMathSymbol{\bsfPsi}{0}{bsfletters}{'011}
\DeclareMathSymbol{\ssfPsi}{0}{ssfletters}{'011}
\DeclareMathSymbol{\bsfOmega}{0}{bsfletters}{'012}
\DeclareMathSymbol{\ssfOmega}{0}{ssfletters}{'012}

\newcommand{\tilP}{\tilde{P}}

\newcommand{\tilQ}{\tilde{Q}}

\newcommand{\tilx}{\tilde{x}}

\DeclareMathOperator*{\argmin}{arg\,min}

\usepackage{thmtools, thm-restate}
\newtheorem{theorem}{Theorem}



%% file: sec/title.tex

\icmltitle{PASA: A \underline{P}rincipled Embedding-Space Watermarking \underline{A}pproach for LLM-Generated Text under \underline{S}emantic-Invariant \underline{A}ttacks}




\icmlsetsymbol{cor}{*}
\vspace{0.03in}
\begin{icmlauthorlist}
  \icmlauthor{Zhenxin Ai}{hkustgz}
  \icmlauthor{Haiyun He}{hkustgz,cor}
\end{icmlauthorlist}

\icmlaffiliation{hkustgz}{
The Hong Kong University of Science and Technology (Guangzhou), Guangzhou, China.
Zhenxin Ai: zhenxinai@hkust-gz.edu.cn
}

\icmlcorrespondingauthor{Haiyun He}{haiyunhe@hkust-gz.edu.cn}

\icmlkeywords{Machine Learning, Large Language Models, Text Watermarking, Semantic Robustness}
\vskip 0.12in

%% file: sec/teaser_fig.tex
\begin{center}
  \vspace{-0.10in}

  \includegraphics[width=\linewidth]{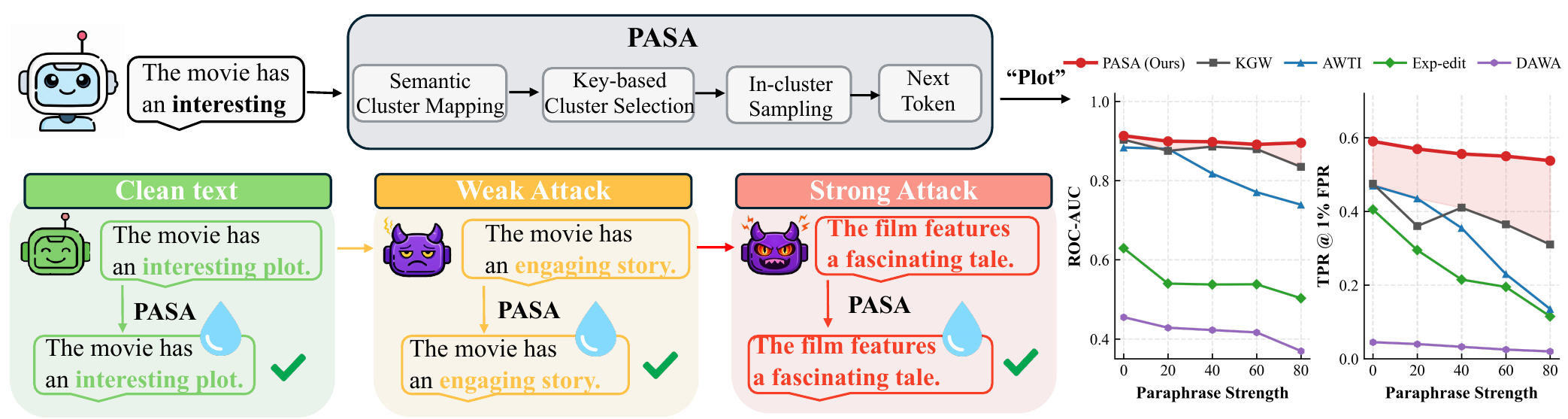}

\captionof{figure}{\textbf{Left:} Illustration of \textbf{PASA}, a principled watermarking approach operating in the \textbf{latent embedding space} on semantic clusters. By anchoring \textbf{shared randomness} to semantic clusters via a secret key, PASA remains robust against \textbf{semantic-invariant attacks} (e.g., paraphrasing) while ensuring \textbf{distortion-free} generation. \textbf{Right:} Quantitative results demonstrating that PASA  outperforms standard vocabulary-space watermarking baselines across varying paraphrase strengths in both AUC-ROC and TPR@1\%FPR.}
  \label{fig:teaser}

  \vspace{0.12in}
\end{center}

%% file: sec/0_abstract.tex
\begin{abstract}
Watermarking for large language models (LLMs) is a promising approach for detecting LLM-generated text and enabling responsible deployment.
However, existing watermarking methods are often vulnerable to semantic-invariant attacks, such as paraphrasing.
We propose PASA, a principled, robust, and distortion-free watermarking algorithm that embeds and detects a watermark at the semantic level.
PASA operates on semantic clusters in a latent embedding space and constructs a distributional dependency between token and auxiliary sequences via shared randomness synchronized by a secret key and semantic history.
This design is grounded in our theoretical framework that characterizes a jointly optimal embedding-detection pair, achieving the fundamental trade-offs among detection accuracy, robustness, and distortion.
Evaluations across multiple LLMs and semantic-invariant attacks demonstrate that PASA remains robust even under strong paraphrasing attacks while preserving high text quality, outperforming standard vocabulary-space baselines.
Ablation studies further validate the effectiveness of our hyperparameter choices. Webpage: \href{https://ai-kunkun.github.io/PASA\_page/}{PASA}.
\end{abstract}

%% file: sec/1_introduction.tex
\section{Introduction}
  
Transformer-based large language models (LLMs) have demonstrated remarkable fluency and coherence in open-ended generation~\cite{achiam2023gpt,touvron2023llama,yang2025qwen3}. 
As LLMs become increasingly powerful, the distinction between machine-generated and human-authored text has become blurred. This raises significant concerns around misuse, including large-scale disinformation~\cite{vykopal2024disinformation,zhu2025does}, automated spear phishing and targeted deception~\cite{hazell2023spear},  amplified threats to organizational security~\cite{mirsky2023threat}, and challenges to academic evaluation systems~\cite{balalle2025reassessing}. 

These concerns motivate the need for verifiable provenance and accountable attribution. Recent work has focused on active provenance via LLM watermarking~\cite{pmlr-v202-kirchenbauer23a,liu2024survey,math13091420,dathathri2024scalable}, which operates directly in the generation process. Unlike post-hoc detectors that are often unreliable, black-box watermarking leverages secret-key–conditioned randomized sampling to insert imperceptible yet statistically detectable patterns into generated text. This mechanism enables reliable third-party detection using only the text, without requiring access to model parameters or APIs.

However, most existing watermarking schemes operate directly on the token vocabulary and construct detection statistics over surface-level token identities. Consequently, such approaches are inherently vulnerable to \textbf{semantic-invariant attacks}: meaning-preserving transformations, such as synonym substitution or paraphrasing, can arbitrarily alter the token realization while leaving the underlying semantics intact. As a result, semantic-invariant rewriting may easily remove the token-level watermarks and distort the associated detection statistics, undermining the effectiveness of naive watermarking schemes. While some alternatives improve robustness via heuristic semantic-aware logit biases~\cite{FuXD24,guo-etal-2024-context-aware,he-etal-2024-watermarks}, they inevitably shift the token distribution in expectation and sacrifice text quality for detectability.

This observation highlights a fundamental scientific challenge: \emph{can we design a watermarking method that balances the following three facets?}
(i) \textbf{Robustness} under semantic-preserving transformations, (ii) \textbf{Distortion-free} generation, in the sense of preserving the original generation distribution, and (iii) \textbf{Principled control over detection errors}, particularly at low false-positive (false-alarm) rates, under adversarial semantic perturbations.

Inspired by the well-known green/red list watermarking paradigm~\cite{pmlr-v202-kirchenbauer23a},
early attempts seek to improve robustness by aligning watermark behavior with contextual embeddings, i.e., token representations that depend on surrounding context, via soft mappings~\cite{liu2023semantic,zhang2024remark}. Along this line, subsequent studies further refine token-level logit biases (watermarking rules) to better trade off robustness and text quality~\cite{giboulot2024watermax,shen2025enhancing,kirchenbauer2024on}.
For instance, \citet{liu2024adaptive} employs an adaptive embedding strategy guided by token entropy, together with semantic-based seeding to mitigate quality degradation while enhancing robustness. These methods aim to better reflect semantic similarity than raw token identities, but still operate largely at the token level. More recently, partition-and-constrain strategies have been explored to design watermarking schemes related to semantic representations. SemStamp~\cite{hou2023semstamp} and k-SemStamp~\cite{hou2024k} partition the sentence-embedding space using locality-sensitive hashing (LSH) or clustering to define watermark regions, while CoheMark~\cite{zhang2025cohemark} leverages fuzzy clustering to encourage discourse-level consistency. These results suggest that the geometric structure of the \textbf{latent semantic space} can provide a more stable anchor for watermarking than raw tokens. However, these approaches are largely heuristic and do not offer principled guarantees on the trade-offs among robustness, distortion, and detection accuracy. 
In parallel, some theoretical efforts have explored the fundamental trade-off between distortion and detection accuracy from both optimization and statistical viewpoints \citep{takezawa2023necessary, wouters2024optimizing, cai2024towards, huang2023towards,li2024statistical}.  For example, DAWA~\cite{he2025theoretically} proves the optimality of a distribution-adaptive approach at the token level, paired with a model-agnostic detector, achieving high true-positive rates (TPRs) at ultra-low false-positive rates (FPRs). Nonetheless, these works fail to incorporate robustness into their frameworks, nor do they guide the principled design of robust watermarking schemes.
For a more comprehensive literature review, please refer to Appendix ~\ref{sec: related works}.


Taken together, existing approaches reveal a clear gap between practice and theory in LLM watermarking. On the one hand, semantic-aware designs suggest that operating in latent embedding spaces can substantially improve robustness to semantic-invariant attacks. On the other hand, existing theoretical frameworks primarily focus on token-level watermarking and do not account for robustness under meaning-preserving transformations, leaving the fundamental trade-offs among robustness, distortion, and detection accuracy poorly understood. This gap motivates a principled watermarking framework that operates at the semantic level while offering explicit theoretical guarantees.

In this work, we introduce \textbf{PASA}, a \textbf{P}rincipled watermarking \textbf{A}pproach under \textbf{S}emantic-invariant \textbf{A}ttacks, which bridges this gap by elevating watermarking from the token level to the semantic level within a formal theoretical framework (cf.~Figure \ref{fig:teaser}). 
\textbf{PASA} operates in a latent semantic embedding space, embedding and detecting watermarks through a carefully designed distributional dependency between token sequences and auxiliary random sequences. Semantic-level shared randomness is synchronized by a secret key and the semantic history of a context window. Concretely, PASA models semantic-invariant rewriting through a semantic mapping function that assigns tokens to semantic clusters in the latent space, and introduces a novel two-stage sampling mechanism that enables stringent control of false alarms while maintaining distortion-free generation.
This design is grounded in an information-theoretic framework extended from \citep{he2025theoretically} that characterizes the jointly optimal embedding–detection pair at the sequence level, achieving strong detection accuracy and semantic robustness while strictly preserving the original distribution.

Our contributions can be summarized as follows:
\begin{itemize}[leftmargin=*,nosep]
\item We propose PASA, a principled watermarking method that operates within the latent semantic space rather than on individual tokens. By anchoring shared randomness at the semantic level, PASA achieves superior detection performance and distortion-free generation while remaining robust to semantic-invariant text modifications.
\item We provide a theoretical framework for robust watermark embedding and detection under semantic-invariant attacks, which grounds the design of PASA. Within this framework, we characterize the fundamental trade-offs among detection accuracy, robustness, and distortion, and identify the jointly optimal embedding-detection pair for a given attack model, providing formal guarantees for PASA.  
\item Extensive evaluations across multiple models and datasets demonstrate that PASA consistently outperforms existing baselines under T5-based replacement and DIPPER paraphrasing attacks. Results confirm superior detectability at low FPRs without compromising text quality or computational efficiency.

\end{itemize}

%% file: sec/2_problem_formulation.tex
\section{A Theoretical Framework for Robust and Distortion-Free Watermarking}
In this section, we develop a theoretical framework for designing robust and distortion-free watermark embedding and detection schemes for LLM-generated text, and formalize a semantic-invariant attack model.
\paragraph{Next-Token-Prediction (NTP) Distribution.} LLMs generate text token by token in an auto-regressive way. A token is the basic processing unit of an LLM and typically corresponds to a word fragment in natural languages. Let $\calV$ denote the token vocabulary, with size $|\calV|=\calO(10^4)$ \citep{liu2019roberta,radford2019language,zhang2022opt,touvron2023llama}. At each step $t$, given a prompt $\pt$ and the previous tokens $x^{t-1}$, an \emph{unwatermarked} LLM samples the next token $X_t$ according to a Next-Token-Prediction (NTP) distribution $Q_t\coloneqq Q_{X_t|x^{t-1},\pt}$. This induces a joint distribution of a length-$T$ token sequence $X^T=(X_1,\ldots,X_T)$, given by $Q_{X^T}=\prod_{t=1}^T Q_{X_t|X^{t-1}}$. We assume that a well-behaved unwatermarked LLM is distributionally indistinguishable from human text generation, and therefore also treat $Q_t$ as the human NTP distribution. For notational simplicity, the dependence on the prompt $\pt$ is suppressed. 

\paragraph{Watermark Embedding.} 
In this paper, we adopt the theoretical framework for LLM watermark embedding from \citet{he2025theoretically}, which encompasses most existing in-process sampling-based watermarking schemes. The watermark embedding scheme constructs an \emph{auxiliary random sequence} $\zeta^T\sim P_{\zeta^T}$ drawn from a space $\calZ^T$, and a \emph{dependence structure} between $\zeta^T$ and the token sequence $X^T$. Therefore, given the auxiliary sequence $\zeta^T$, the \emph{watermarked} LLM samples the next token $X_t$ according to a modified NTP distribution $P_{X_t|x^{t-1},\zeta_t}$, and the induced conditional joint distribution of token sequence is given by $P_{X^T|\zeta^T}=\prod_{t=1}^T P_{X_t|X^{t-1},\zeta^T}$. Note that the joint distribution of the watermarked token sequence $X^T$ is given by $P_{X^T}$, which might be different from the original $Q_{X^T}$. 

We define a watermark embedding scheme as $\epsilon$-distorted if the statistical divergence between the watermarked distribution $P_{X^T}$ and the original  $Q_{X^T}$ satisfies
\begin{equation}
    \rvD(P_{X^T}, Q_{X^T})\leq \epsilon,
\end{equation}
where $\rvD$ can be any distortion metric that measures the dissimilarity between distributions. For $\epsilon=0$, the watermark embedding scheme is \emph{distortion-free}.

\paragraph{Watermark Detection under Semantic-Invariant Attacks.}
A common randomness is shared through the auxiliary random sequence $\zeta^T$ and a secret key between the embedding and detection phases. If a watermarked LLM generates a token sequence $X^T$, it depends on $\zeta^T$ statistically; otherwise, $X^T$ and $\zeta^T$ are independent. The watermark detection thus boils down to a binary hypothesis testing problem:
\begin{itemize}[leftmargin=*,topsep=-0cm, parsep=-0.1cm]
    \item  $\rmH_0$: $X^T$ is generated by a human, i.e., $(X^T,\zeta^T)\sim Q_{X^T}\otimes P_{\zeta^T}$;
    \item $\rmH_1$: $X^T$ is generated by a watermarked LLM, i.e., $(X^T,\zeta^T)\sim P_{X^T,\zeta^T}$.
\end{itemize}
However, the detector may receive watermarked text that has been altered by an adversary. We consider a broad class of semantic-invariant attacks, where the text can be modified in arbitrary ways as long as its semantics are preserved, such as token replacement and paraphrasing. Specifically, let $f:\calV^T\to [K]$ be a surjective function that maps a token sequence $X^T$ to $K$ distinct semantic clusters in the latent embedding space. Clearly, given any token sequence $x^T$, $f$ induces an equivalence class containing $x^T$: $\calB_f(x^T)\coloneqq \{\tilx^T\in\calV^T: f(\tilx^T)=f(x^T)\}$.  Assuming that the adversary can arbitrarily modify any token sequence $x^T$ within its equivalence class $\calB_f(x^T)$, we evaluate a detector $\gamma:\calV^T\times \calZ^T \to \{0,1\}$ by its worst-case detection errors over all possible attacks induced by $f$:
\begin{itemize}[leftmargin=*,topsep=0cm, parsep=-0.4cm]
    \item False-alarm (FA) error:
    \setlength{\abovedisplayskip}{0pt}
    \setlength{\belowdisplayskip}{0pt}
    \setlength{\abovedisplayshortskip}{0pt}
    \setlength{\belowdisplayshortskip}{0pt}
    \begin{equation}
    \small
        \beta_0^f(\gamma, Q_{X^T},P_{\zeta^T})\coloneqq\bbE_{Q_{X^T}\otimes P_{\zeta^T}}\left[\sup\limits_{\substack{\tilx^T \in \calB_f(X^T)}}\gamma(\tilx^T,\zeta^T ) \right].\label{Eq:FA error}
    \end{equation}
    \item Miss-detection (MD) error:
    \setlength{\abovedisplayskip}{0pt}
    \setlength{\belowdisplayskip}{0pt}
    \setlength{\abovedisplayshortskip}{0pt}
    \setlength{\belowdisplayshortskip}{0pt}
    \begin{equation}
        \small\!\!\!\beta_1^f(\gamma,P_{X^T,\zeta^T})\coloneqq\bbE_{P_{X^T,\zeta^T}}\left[\sup\limits_{\tilx^T\in \calB_f(X^T)}(1-\gamma(\tilx^T,\zeta^T )) \right].\label{Eq:MD error}
    \end{equation}
\end{itemize}
FA error occurs when human-written text is detected as watermarked, whereas MD error occurs when watermarked LLM-generated text is classified as human-written.

\paragraph{Optimization Problem.} As human behaviors may vary widely, to effectively reduce the FA error in reality, we aim to control the \emph{worst-case} FA error over all possible human texts under a threshold $\alpha\in(0,1)$. Our objective is to design a robust and $\epsilon$-distorted watermark embedding scheme and  detector that minimizes the MD error while controlling the worst-case FA error, namely, solving the optimization problem
\setlength{\abovedisplayskip}{0pt}
    \setlength{\belowdisplayskip}{-5pt}
    \setlength{\abovedisplayshortskip}{0pt}
    \setlength{\belowdisplayshortskip}{-5pt}
\begin{align}
    &\inf_{\gamma,P_{X^T,\zeta^T}}  \beta_1^f(\gamma,P_{X^T,\zeta^T}) \tag{P} \label{Eq: opt}\\
    &\quad 
    \text{ s.t. }  \sup_{Q_{X^T}}\beta_0^f(\gamma,Q_{X^T},P_{\zeta^T})\leq \alpha, \; \rvD(P_{X^T},Q_{X^T})\leq \epsilon. 
\end{align}
Here, we allow the distortion level $\epsilon\geq 0$ to demonstrate the trade-off among the MD error $\beta_1^f$, the FA constraint $\alpha$, the size of the output set of $f$, and $\epsilon$. However, in practice, we enforce $\epsilon=0$ for a distortion-free watermarking approach.

%% file: sec/3_theoretical.tex
\section{Theoretical Foundations and Algorithm}
 \begin{figure*}[t!]
      \makebox[\textwidth][c]{
        \includegraphics[width=\textwidth]{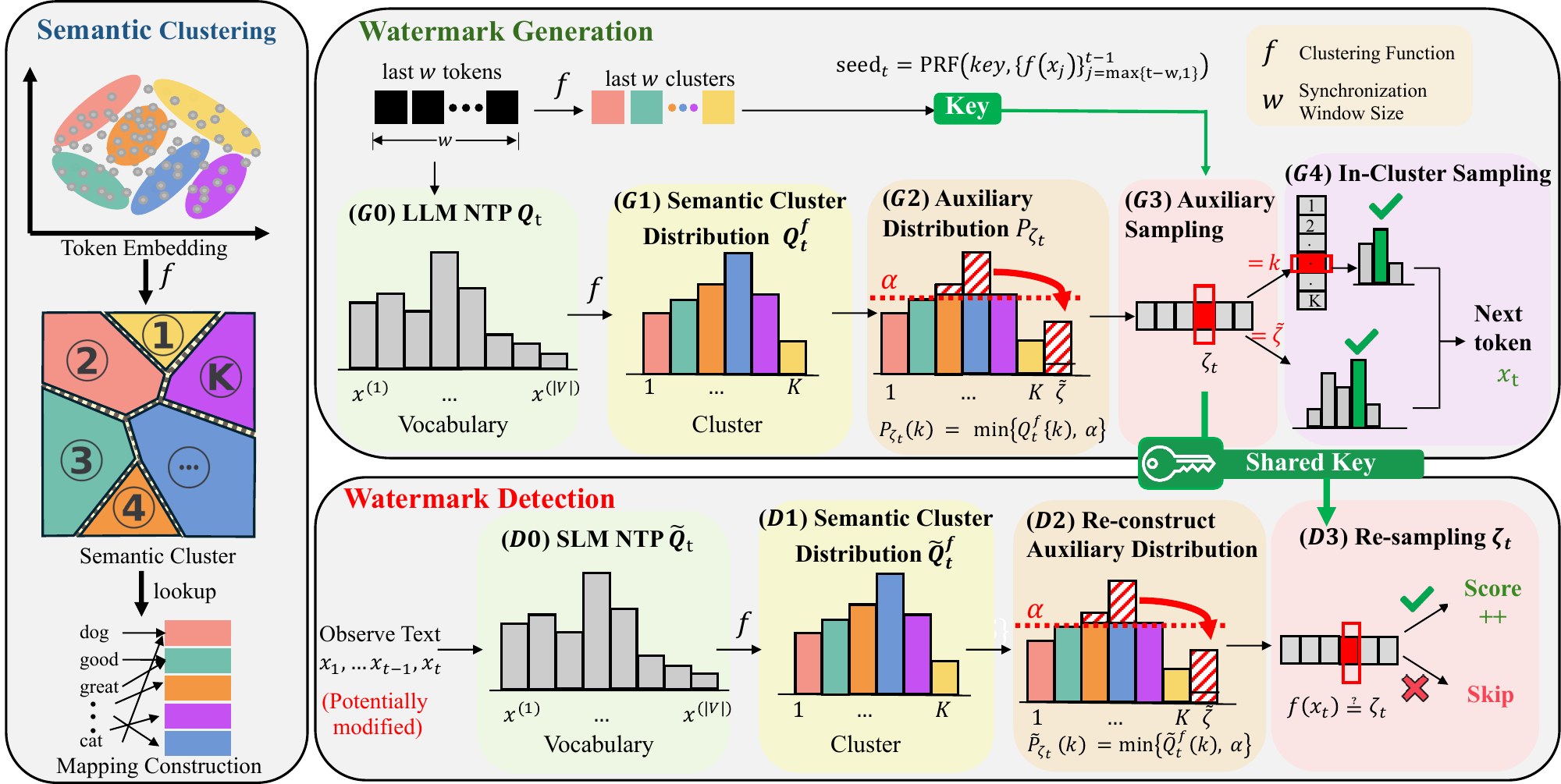}
      }
\caption{\textbf{Overview of PASA.}
\textbf{Left:} Construction of the semantic mapping function $f$, which partitions the latent token embedding space into $K$ semantic clusters.
    \textbf{Right: Top (Generation).}  (\textbf{G1}) At each step $t$, the NTP distribution $Q_t$ is transformed into the cluster distribution $Q_t^{f}$.  (\textbf{G2}) The auxiliary distribution $P_{\zeta_t}$ is truncated by a threshold $\alpha$ and contains an overflow state $\tilde{\zeta}$ to ensure FA error control. (\textbf{G3}) Auxiliary sampling of $\zeta_t$ uses a $\seed_t$ generated by a PRF with a secret key and $w$ semantic history as input.  (\textbf{G4}) The sampled auxiliary random variable $\zeta_t$  guides the sampling of the next token $x_t$ within the selected semantic cluster.
    \textbf{Bottom (Detection).} (\textbf{D0}-\textbf{D2}) For a potentially modified observed token sequence, the detector approximates the generation distribution through an SLM. (\textbf{D3}) The detection score accumulates based on the alignment between the resampled $\zeta_t$ and the observed semantic cluster $f(x_t)$.}

 \label{fig:overview}
\end{figure*}
Building on the semantic-invariant attack model $f$ formalized in the framework, we develop the theoretical foundations of robust watermarking and derive an algorithm that leverages semantic representations to embed watermarks in the latent embedding space.   
\subsection{Theoretical Foundations}
\label{sec: theory}
\paragraph{Error-Robustness-Distortion Trade-Offs.} We characterize the fundamental trade-offs among the detection errors, robustness level, and distortion level by presenting the optimal objective value of the optimization problem \eqref{Eq: opt} in the following theorem. In particular, the robustness level of a watermarking scheme is inversely related to the size $K$ of the semantic cluster set induced by the semantic mapping function $f$.
\begin{theorem}[Minimum MD Error]\label{thm: min MD error}
    Given any tuple of $(Q_{X^T},\alpha,\epsilon,f)$, the minimum MD error attained from \eqref{Eq: opt} is 
    \begin{equation}
        \beta_1^{f,*}\coloneqq\!\!\!\! \min\limits_{\substack{P_{X^T}:\\\rvD(P_{X^T},Q_{X^T})\leq \epsilon}}\sum_{k\in[K]}\bigg(\!\!\bigg(\!\!\sum_{\substack{x^T:f(x^T)=k}}\!\!\!\!\! P_{X^T}(x^T) \!\!\bigg)- \alpha \bigg)_+.
    \end{equation}
\end{theorem}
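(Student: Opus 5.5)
Our plan is to prove the two matching bounds separately and then minimize over $P_{X^T}$ in the $\epsilon$-ball. Throughout we write $P^f(k)\coloneqq\sum_{x^T:f(x^T)=k}P_{X^T}(x^T)$ for the cluster distribution induced by $f$. The key reduction is that, since the adversary may move $x^T$ anywhere in $\calB_f(x^T)$, only the robustified detector matters:
\begin{equation}
\bar\gamma(k,\zeta^T)\coloneqq\sup_{\tilx^T:f(\tilx^T)=k}\gamma(\tilx^T,\zeta^T).
\end{equation}
Then $\beta_0^f=\bbE_{Q\otimes P_\zeta}[\bar\gamma(f(X^T),\zeta^T)]$ and $\beta_1^f=1-\bbE_{P_{X^T,\zeta^T}}[\inf_{\tilx^T\in\calB_f(X^T)}\gamma(\tilx^T,\zeta^T)]\ge 1-\bbE[\bar\gamma(f(X^T),\zeta^T)]$. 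So it suffices to work with detectors that are functions of $(f(X^T),\zeta^T)$; for such detectors the inequality is an equality.

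\textbf{Converse (lower bound).} Fix any feasible $(\gamma,P_{X^T,\zeta^T})$. Because the FA constraint must hold for every $Q_{X^T}$, we may take $Q_{X^T}$ to be a point mass on any sequence in cluster $k$. This gives
\begin{equation}
\bbE_{P_{\zeta^T}}[\bar\gamma(k,\zeta^T)]\le\alpha\quad\text{for every }k\in[K].
\end{equation}
Next, for every $k$ and $z$ we have $\bar\gamma(k,z)\le1$, so
\begin{align}
1-\beta_1^f&\le\sum_k\bbE_{P_{\zeta^T}}\big[P_{X^T|\zeta^T}(f^{-1}(k)\mid\zeta^T)\,\bar\gamma(k,\zeta^T)\big]\\
&\le\sum_k\min\Big\{P^f(k),\ \bbE_{P_{\zeta^T}}[\bar\gamma(k,\zeta^T)]\Big\}\le\sum_k\min\{P^f(k),\alpha\}.
\end{align}
The second inequality bounds each summand once by $P_{X^T|\zeta^T}(f^{-1}(k)\mid\zeta^T)\le1$ and once by $\bar\gamma\le1$. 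Hence $\beta_1^f\ge\sum_k(P^f(k)-\alpha)_+$. Minimizing over the admissible marginals $P_{X^T}$ gives $\beta_1^f\ge\beta_1^{f,*}$.

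\textbf{Achievability.} Take the minimizer $P_{X^T}$ and build a coupling in the spirit of DAWA from \citet{he2025theoretically}, but at the cluster level.
\begin{itemize}[leftmargin=*,nosep]
\item Let $\zeta$ take values in $[K]\cup\{\tilde\zeta\}$, where $\tilde\zeta$ is an overflow symbol.
\item Set $P_\zeta(k)=\min\{P^f(k),\alpha\}$ and put the remaining mass $1-\sum_k\min\{P^f(k),\alpha\}$ on $\tilde\zeta$.
\item Given $\zeta=k$, sample $X^T\sim P_{X^T}(\cdot\mid f(X^T)=k)$.
\item Given $\zeta=\tilde\zeta$, sample $X^T$ from the normalized residual $P_{X^T}(x^T)\big(1-\min\{P^f(f(x^T)),\alpha\}/P^f(f(x^T))\big)$.
\end{itemize}
We then check that this coupling has marginal exactly $P_{X^T}$, so the distortion constraint holds. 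Use the detector $\gamma(x^T,\zeta)=\mathbb{1}\{\zeta=f(x^T)\}$, which is invariant on each $\calB_f$. Its FA error under any $Q$ is $\sum_kQ^f(k)P_\zeta(k)\le\alpha$. Its MD error equals $P(\zeta=\tilde\zeta)=\sum_k(P^f(k)-\alpha)_+$, because the overflow branch never produces a match. (This computation, and the normalization of the residual, need $\alpha\le1$ and nonnegativity of the residual, which holds since $\min\{P^f,\alpha\}\le P^f$.)

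The step we expect to be the main obstacle is the converse: justifying the reduction to $\bar\gamma$ for arbitrary, possibly randomized, detectors. It also requires care that the worst-case supremum over $Q_{X^T}$ really yields the per-cluster constraint; point masses suffice, provided $\calZ^T$ is measurable and the supremum inside the expectation is measurable.
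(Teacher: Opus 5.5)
Your proposal is correct and follows the paper's proof essentially step for step. The matching steps are: the point-mass choice of $Q_{X^T}$ giving the per-cluster FA bound, the $\inf\le\sup$ relaxation, the two-sided bound $C(k)\le\min\{P^f(k),\alpha\}$, and achievability via the truncated cluster-level auxiliary distribution with an overflow state together with the detector $\mathbbm{1}\{\zeta=f(x^T)\}$. The only minor differences are that you build the coupling around the minimizer in the $\epsilon$-ball rather than around $Q_{X^T}$ (the paper's appendix only does the $\epsilon=0$ case), and that the obstacle you flag is not real: the converse only uses $\inf\le\sup$, so it goes through verbatim for $[0,1]$-valued (randomized) detectors.
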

The proof of Theorem \ref{thm: min MD error} is deferred to Appendix \ref{App: pf of thm: min MD error}.  The characterization immediately reveals that the minimum MD error $\beta_1^{f,*}$ decreases as the distortion level $\epsilon$ or the FA constraint $\alpha$ increases, and as the robustness requirement is relaxed (i.e., as $K$ increases). In the extreme case $K = |\calV|$, the result reduces to the classical setting in which robustness is not incorporated into the watermarking design.

\paragraph{Jointly Optimal Robust and Distortion-Free Scheme.}
We derive the jointly optimal watermark embedding and detection schemes that achieve the minimum MD error $\beta_1^{f,*}$. In particular, we let $\epsilon=0$ and thus $P_{X^T}=Q_{X^T}$, leading to a distortion-free scheme. 
\begin{theorem}[(Informal) Jointly Optimal Watermark Embedding and Detection]\label{thm: optimal scheme} 
The optimal pair of watermark detector and embedding method accepts the form:
\setlength{\abovedisplayskip}{0pt}
    \setlength{\belowdisplayskip}{0pt}
    \setlength{\abovedisplayshortskip}{0pt}
    \setlength{\belowdisplayshortskip}{0pt}
\begin{itemize}[leftmargin=*,topsep=-0cm, parsep=-0.1cm]
    \item Detector: 
    \begin{equation}
        \gamma^*(X^T,\zeta^T)=\mathbbm{1}\{f(X^T)=\vton(\zeta^T)\}, \label{Eq: opt detector}
    \end{equation}
    where $\vton:\calZ^T\to[K]\cup \{\tilde{\zeta}\}$ is a bijective function that maps a sequence to a real number and $\tilde{\zeta}\in\bbN\setminus[K]$ is called the overflow state.

    \item Embedding method: the watermark embedding consists of two stages: 1) construct the auxiliary sequence distribution $P_{\zeta^T}^*$; 2) construct the conditional sampling distribution $P_{X^T|\zeta^T}^*$ associated with ~$\gamma^*$, such that $\bbE_{\zeta^T}[P_{X^T|\zeta^T}^*]=Q_{X^T}$. The detailed expressions are presented in the algorithm design below. 
\end{itemize} 
\end{theorem}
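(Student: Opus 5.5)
Since Theorem~\ref{thm: min MD error} already gives $\beta_1^{f,*}$ as the optimal value of \eqref{Eq: opt}, it suffices to exhibit a pair $(\gamma^*,P^*_{X^T,\zeta^T})$ with $\epsilon=0$ that is feasible for \eqref{Eq: opt} and attains $\beta_1^{f,*}$. Specializing Theorem~\ref{thm: min MD error} to $\epsilon=0$ (so $P_{X^T}=Q_{X^T}$) and writing $Q^f(k)\coloneqq\sum_{x^T:f(x^T)=k}Q_{X^T}(x^T)$ for the induced cluster distribution, the target is
\begin{equation}
\beta_1^{f,*}=\sum_{k\in[K]}\big(Q^f(k)-\alpha\big)_+ .
\end{equation}
The plan is to work on the induced cluster variable $Y\coloneqq f(X^T)\in[K]$ together with the relabelled auxiliary variable $U\coloneqq\vton(\zeta^T)\in[K]\cup\{\tilde\zeta\}$. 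Because $\vton$ is a bijection, any distribution on $U$ lifts uniquely to a distribution on $\zeta^T$.

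\textbf{Step 1 (robustness and FA control of $\gamma^*$).} The detector in \eqref{Eq: opt detector} depends on $X^T$ only through $f(X^T)$, so it is constant on each class $\calB_f(x^T)$. The suprema in \eqref{Eq:FA error} and \eqref{Eq:MD error} are therefore inactive. Under $\rmH_0$, $X^T$ and $\zeta^T$ are independent, which gives
\begin{equation}
\beta_0^f=\sum_{k}Q^f(k)\,P_U(k)\le \max_{k\in[K]}P_U(k).
\end{equation}
Taking the supremum over $Q_{X^T}$ (point masses on a single cluster achieve the max), the worst-case FA error equals $\max_k P_U(k)$. The FA constraint thus reduces to $P_U(k)\le\alpha$ for every $k\in[K]$.

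\textbf{Step 2 (construct $P^*_U$ and the coupling).} Set $P^*_U(k)=\min\{Q^f(k),\alpha\}$ for $k\in[K]$, and place the remaining mass $\sum_k(Q^f(k)-\alpha)_+$ on the overflow state $\tilde\zeta$. The coupling of $(U,Y)$ is then built as follows:
\begin{itemize}[leftmargin=*]
\item if $U=k\in[K]$, set $Y=k$;
\item if $U=\tilde\zeta$, draw $Y$ from the normalized excess $(Q^f(k)-\alpha)_+/P^*_U(\tilde\zeta)$.
\end{itemize}
The marginal of $Y$ is then $\min\{Q^f,\alpha\}+(Q^f-\alpha)_+=Q^f$. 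Finally, given $Y=k$, sample $X^T$ from $Q_{X^T}$ conditioned on $f(X^T)=k$. This yields $P^*_{X^T|\zeta^T}$ with $\bbE_{\zeta^T}[P^*_{X^T|\zeta^T}]=Q_{X^T}$, i.e., the scheme is distortion-free. Under $\rmH_1$ the MD error is $P(Y\neq U)\le P(U=\tilde\zeta)=\sum_k(Q^f(k)-\alpha)_+=\beta_1^{f,*}$. Since Theorem~\ref{thm: min MD error} is a lower bound for every feasible pair, this inequality holds with equality and the pair is optimal.

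\textbf{Step 3 (sequential realization).} The step I expect to be the main obstacle is matching this sequence-level construction to the token-by-token, two-stage form that ``the algorithm design below'' presumably specifies. Concretely, one must show that $\zeta_t$ can be drawn per step and that $X_t$ can be sampled within the cluster selected by $\zeta_t$, while the sequence-level marginals above are preserved. I would first verify the one-step case, where $Q^f_t$ is the cluster distribution of $Q_t$, using exactly the Step~2 coupling. I would then argue by the chain rule that composing these per-step couplings reproduces $Q_{X^T}$. The optimality claim itself, however, is only needed in the sequence-level form proved above, and that argument is self-contained given Theorem~\ref{thm: min MD error}.
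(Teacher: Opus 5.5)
Your proposal is correct and follows essentially the paper's proof. You build the same truncated auxiliary law $P^*_{\zeta^T}$ with the overflow mass and the same in-cluster conditional $P^*_{X^T|\zeta^T}$. You check the FA constraint via point masses, using that $\gamma^*$ is constant on each $\calB_f(x^T)$, and show the MD error equals the overflow probability $\sum_k(Q^f(k)-\alpha)_+$. Optimality then follows from the converse in Theorem~\ref{thm: min MD error}.

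Your Step~3 is unnecessary: the paper's argument stays entirely at the sequence level. You are right not to rely on it. Per-step truncation of $Q_t^f$ at $\alpha$ yields a different joint law of $(\zeta^T,X^T)$ than the sequence-level construction. Via the chain rule, only distortion-freeness carries over (under idealized fresh per-step randomness), not the optimality claim.
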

The formal statement and proof of Theorem \ref{thm: optimal scheme} is deferred to Appendix \ref{App: pf of thm: optimal scheme}. The optimal design embeds and detects watermarks in the latent semantic embedding space induced by the attack model $f$, aligning with intuitive semantic invariance. Specifically, the optimal auxiliary distribution $P_{\zeta^T}^*$ is a ``truncated'' version of the semantic embedding distribution, augmented with an overflow state $\tilde{\zeta}$ to control the FA error. Conditioned on the sampled auxiliary sequence $\zeta^T$, the resulting conditional sampling distribution performs a re-normalized in-cluster token sampling, and preserves the original token sequence distribution $Q_{X^T}$ in expectation. These theoretical insights directly motivate our practical algorithm design.

\subsection{Algorithm Design}


In this section, we introduce a \underline{\textbf{P}}rincipled embedding-space watermarking \underline{\textbf{A}}pproach under \underline{\textbf{S}}emantic-invariant \underline{\textbf{A}}ttacks (\textbf{PASA}). Building on the theoretical foundations and insights, PASA embeds a watermark into LLM-generated text in the latent token embedding space via a two-stage sampling strategy according to Theorem \ref{thm: optimal scheme}, while preserving the original NTP distribution. For detection, PASA accumulates the score $\mathbbm{1}\{f(x_t)=\zeta_t\}$ for a given $\zeta_t\in[K]\cup\{\tilde{\zeta}\}$ (cf.~\eqref{Eq: opt detector}) across tokens $x_t$ and compares it to a threshold. This approach achieves high detection accuracy under semantic-invariant attacks while preserving text generation quality. 

\subsubsection{Watermark Embedding via a Two-Stage Sampling Strategy}



We implement the embedding method proven in Theorem \ref{thm: optimal scheme} at each token generation step $t$.
\paragraph{Stage 1: Auxiliary Distribution Construction and Sampling.} As shown in Figure \ref{fig:overview}, we first construct a surjective mapping $f: \calV \to [K]$, partitioning the token embedding space into $K$ disjoint semantic clusters. 

\noindent\textbf{(G1) Semantic Cluster Distribution.} The semantic mapping function $f$ directly transforms the NTP distribution $Q_t$ to a semantic cluster distribution:
\setlength{\abovedisplayskip}{0pt}
    \setlength{\belowdisplayskip}{0pt}
    \setlength{\abovedisplayshortskip}{0pt}
    \setlength{\belowdisplayshortskip}{0pt}
\begin{equation}
Q_t^f(k) \coloneqq \sum_{x: f(x)=k} Q_t(x), \quad \forall k \in [K],
\end{equation}
which is insensitive to the token-level perturbation.

\noindent\textbf{(G2) Auxiliary Distribution.} We construct the auxiliary distribution $P_{\zeta_t}$ on the latent space $\calZ=[K]\cup \{\tilde{\zeta}\}$ w.r.t.~the semantic cluster distribution $Q_t^f$, where $\tilde{\zeta}$ represents the overflow state (cf.~Theorem \ref{thm: optimal scheme}). Given a FA error constraint $\alpha$, we let $P_{\zeta_t}(k)=\min\{Q_t^f(k),\alpha\}$ for all semantic cluster index $k\in[K]$, and accumulate the overflowed probability masses to the overflow state $\tilde{\zeta}$: 
\begin{equation}
P_{\zeta_t}(\tilde{\zeta}) = 1 - \sum_{k=1}^K P_{\zeta_t}(k) = \sum_{k=1}^K (Q_t^f(k) - \alpha)_+.
\end{equation}
This construction ensures that the MD error is minimized while the FA error is controlled under $\alpha$, as shown in the proof of Theorem \ref{thm: optimal scheme}. 

\noindent\textbf{(G3) Auxiliary  Sampling.} We sample the auxiliary variable $\zeta_t\sim P_{\zeta_t}$ using a seed generated by a  pseudo-random function (PRF), whose input consists of the semantic cluster indices of the previous $w$ tokens and a shared secret key:
\begin{equation}
    \seed_t=\text{PRF}(\text{key}, \{f(x_j)\}_{j=\max\{t-w,1\}}^{t-1}).
\end{equation}
The seeds can be recovered during detection with the shared secret key and the semantic mapping function.
\paragraph{Stage 2: (G4) In-Cluster Sampling.} The next token is sampled according to the constructed sampling distribution $P_{X_t|x^{t-1},\zeta_t}$ conditioned on the auxiliary variable $\zeta_t$. Given different values of sampled $\zeta_t$, the next token sampling proceeds via two branches

\begingroup
\setlength{\abovedisplayskip}{3pt plus 0pt minus 0pt}
\setlength{\belowdisplayskip}{3pt plus 0pt minus 0pt}
\setlength{\abovedisplayshortskip}{3pt plus 0pt minus 0pt}
\setlength{\belowdisplayshortskip}{3pt plus 0pt minus 0pt}

\begin{itemize}[leftmargin=*,topsep=2pt,itemsep=2pt,parsep=0pt,partopsep=0pt]
    \item If $\zeta_t\!=\!k\!\in\![K]$, we sample $X_t$ within the semantic cluster $k$ according to a re-normalized distribution:
    \begin{equation}
        X_t \sim \left(\frac{Q_t(x)\mathbbm{1}\{f(x)=k\}}{Q_t^f(k)}\right)_{x\in\calV}.
    \end{equation}

    \item If $\zeta_t=\tilde{\zeta}$, we sample $X_t$ within each semantic cluster $k$ with a probability proportional to the overflow mass $(Q_t^f(k)-\alpha)_+$, which maintains the NTP distribution identical to $Q_t$ in expectation. The conditional sampling distribution over tokens $x\in\calV$ is given by
    \begin{equation}
        P_{X_t|x^{t-1},\zeta_t}(x)\propto (Q_t^f(f(x))-\alpha)_+\frac{Q_t(x)}{Q_t^f(f(x))}.
    \end{equation}
\end{itemize}
\endgroup
This two-stage sampling strategy enables semantic-level watermark embedding and ensures distortion-free generation where $\bbE_{\zeta_t}[P_{X_t|x^{t-1},\zeta_t}]=Q_t$, while allowing the detector to recover the auxiliary sequence via a shared secret key.

\begin{table*}[t]
\centering

\caption{\textbf{Detection performance on clean text and under semantic-invariant token-replacement attacks.} Comparisons of ROC-AUC, TPR@1\%FPR, and TPR@10\%FPR across Llama2-13B and Mistral-8$\times$7B architectures. T5-Large and T5-XXL are used as attackers. \best{Best}, \second{Second Best}, and \third{Third Best} results are marked in each column.}
\label{tab:detection_clean_edited_across_llms}

\setlength{\tabcolsep}{2.5pt}
\renewcommand{\arraystretch}{1.15}
\resizebox{1.0\linewidth}{!}{
\begin{tabular}{llccccccccc}
\toprule
\multirow{3}{*}{LLM} & \multirow{3}{*}{Method}
& \multicolumn{3}{c}{Clean Text}
& \multicolumn{6}{c}{Token Replacement Attack} \\
\cmidrule(lr){3-5} \cmidrule(lr){6-11}
& & \multicolumn{3}{c}{C4}
& \multicolumn{3}{c}{T5-Large}
& \multicolumn{3}{c}{T5-XXL} \\
\cmidrule(lr){3-5} \cmidrule(lr){6-8} \cmidrule(lr){9-11}
& & ROC-AUC$\uparrow$ & TPR@1\%FPR$\uparrow$ & TPR@10\%FPR$\uparrow$
& ROC-AUC$\uparrow$ & TPR@1\%FPR$\uparrow$ & TPR@10\%FPR$\uparrow$
& ROC-AUC$\uparrow$ & TPR@1\%FPR$\uparrow$ & TPR@10\%FPR$\uparrow$ \\
\midrule

\multirow{5}{*}{Llama2-13B}
& KGW+23   & \second{0.9990} & \second{0.9950} & \second{0.9950} & 0.9750 & 0.7350 & 0.9300 & 0.9196 & 0.6800 & 0.8100 \\
& Exp-edit & \best{1.0000} & \best{1.0000} & \best{1.0000} & \third{0.9905} & \second{0.9400} & \third{0.9700} & \second{0.9536} & \best{0.8700} & \best{0.9250} \\
& AWTI     & \best{1.0000} & \best{1.0000} & \best{1.0000} & \best{0.9978} & \best{0.9978} & \best{0.9950} & \best{0.9702} & \third{0.7950} & \second{0.9050} \\
& DAWA     & \third{0.9950} & \second{0.9950} & \second{0.9950} & 0.8067 & 0.3300 & 0.5700 & 0.6280 & 0.1450 & 0.2900 \\
& Ours     & \best{1.0000} & \best{1.0000} & \best{1.0000} & \second{0.9946} & \third{0.9296} & \second{0.9899} & \third{0.9392} & \second{0.8040} & \third{0.8794} \\
\midrule

\multirow{4}{*}{Mistral-8$\times$7B}
& Exp-edit & \second{0.9993} & \second{0.9950} & \second{0.9950} & \third{0.9760} & \third{0.9250} & \third{0.9650} & \second{0.9409} & \best{0.8750} & \best{0.9100}  \\
& AWTI     & \best{1.0000} & \best{1.0000} & \best{1.0000} & \best{0.9972} & \best{0.9400} & \best{0.9900} & \best{0.9611} & \third{0.7200} & \second{0.8900} \\
& DAWA     & \best{1.0000} & \best{1.0000} & \best{1.0000} & 0.8696 & 0.5350 & 0.6800 & 0.7397 & 0.2750 & 0.4300 \\
& Ours     & \best{1.0000} & \best{1.0000} & \best{1.0000} & \second{0.9902} & \second{0.9300} & \second{0.9700} & \third{0.9222}  & \second{0.7650} & \third{0.8600} \\
\bottomrule
\end{tabular}}
\end{table*}
\begin{table*}[t]
\centering
\caption{\textbf{Detection performance under semantic-invariant paraphrasing attacks (DIPPER).} Results are reported for three configurations with increasing structural perturbation (Order Diversity), ranging from $Ord=0$ to $Ord=80$, with fixed lexical diversity $Lex=60$. \best{Best}, \second{Second Best}, and \third{Third Best} results are marked in each column.}
\label{tab:typedipper}

\setlength{\tabcolsep}{2.5pt}
\renewcommand{\arraystretch}{1.15}

\resizebox{1.0\linewidth}{!}{
\begin{tabular}{lccccccccc}
\toprule
\multirow{2}{*}{Method}
& \multicolumn{3}{c}{DIPPER (Lex=60, Ord=0)}
& \multicolumn{3}{c}{DIPPER (Lex=60, Ord=20)}
& \multicolumn{3}{c}{DIPPER (Lex=60, Ord=80)} \\
\cmidrule(lr){2-4}\cmidrule(lr){5-7}\cmidrule(lr){8-10}
& ROC-AUC$\uparrow$ & TPR@1\%FPR$\uparrow$ & TPR@10\%FPR$\uparrow$
& ROC-AUC$\uparrow$ & TPR@1\%FPR$\uparrow$ & TPR@10\%FPR$\uparrow$
& ROC-AUC$\uparrow$ & TPR@1\%FPR$\uparrow$ & TPR@10\%FPR$\uparrow$ \\
\midrule

KGW+23
& \second{0.9032} & \second{0.4750} & \second{0.7000}
& \third{0.8750}  & \third{0.3600}  & \third{0.6400}
& \second{0.8346} & \second{0.3100} & \second{0.5200} \\

Exp-edit
& 0.6291 & 0.4050 & 0.5000
& 0.5398 & 0.2950 & 0.4050
& 0.5029 & 0.1150 & 0.3150 \\

AWTI
& \third{0.8835} & \third{0.4700}  & \third{0.6850}
& \second{0.8801} & \second{0.4350} & \second{0.6800}
& \third{0.7391}  & \third{0.1350}  & \third{0.3600} \\

DAWA
& 0.4551 & 0.0450 & 0.1300
& 0.4284 & 0.0400 & 0.0800
& 0.3697 & 0.0200 & 0.0400 \\

\textbf{Ours}
& \best{0.9132} & \best{0.5900} & \best{0.7970}
& \best{0.8993} & \best{0.5693} & \best{0.7744}
& \best{0.8956} & \best{0.5377} & \best{0.7593} \\

\bottomrule
\end{tabular}}
\end{table*}
\subsubsection{Watermark Detection}

The detector observes a token sequence $x^T$ and has access to the shared semantic mapping function $f$, the secret key, the FA error constraint $\alpha$, and a surrogate language model (SLM). The SLM, with NTP distribution denoted by $\tilQ_t$, is a lightweight and parameter-efficient approximation of the LLM suitable for local deployment and facilitates detection. The detection process mirrors the generation procedure at each token position $t$. 

\noindent\textbf{(D0) \& (D1) Approximation.} With the SLM, the detector obtains an approximated NTP distribution $\tilQ_t$ for each token $x_t$ and transforms it to the corresponding semantic cluster distribution $\tilQ_t^f$ via the semantic mapping function $f$.

\noindent\textbf{(D2) Reconstruct Auxiliary Distribution.} Similar to (G2) in the watermark embedding process, the detector reconstructs the auxiliary distribution $\tilP_{\zeta_t}$ based on the approximate $\tilQ_t^f$ and the threshold $\alpha$.

\noindent\textbf{(D3) Replay and Scoring.} With the shared secret key and the observed semantic history $\{f(x_j)\}_{j=\max\{t-w,1\}}^{t-1}$, the detector recovers the seed $\seed_t$ with the same PRF and re-samples $\zeta_t\sim \tilP_{\zeta_t}$. Grounded by Theorem \ref{thm: optimal scheme}, the detector accumulates the score $\mathbbm{1}\{f(x_t)=\zeta_t\}$ for each observed pair  $(x_t,\zeta_t)$. When the re-sampled $\zeta_t$ matches the semantic cluster of $x_t$, the token contributes a unit score; when they do not match or $\zeta_t=\tilde{\zeta}$, the token is skipped since $\mathbbm{1}\{f(x_t)=\zeta_t\}\equiv 0$. Notably, this mechanism allows the detector to skip some low-entropy tokens with certain probabilities, which effectively reduces the FA error in practice.


%% file: sec/5_experiments.tex
\section{Experiments}

This section presents an empirical evaluation of our proposed PASA algorithm. 

\subsection{Experimental Setup}
\label{sec: setup}

\paragraph{Semantic Mapping and Clustering.}
\label{sec: Semantic Mapping and Clustering.} We adopt a pretrained model \texttt{gte-Qwen2-7B-instruct}~\cite{li2023towards} to encode each token as a semantic embedding vector in the latent space. To ensure semantic consistency, we embed tokens using a fixed instruction template and apply $\ell_2$ normalization, so that similarity in the latent space is measured by cosine similarity. We then apply  K-means clustering~\cite{lloyd1982least} to partition the embedding space into $K$ disjoint semantic clusters (setting $K=4$ by default), thereby defining the semantic mapping function $f$.

\paragraph{Models and Dataset.}
\label{sec: model & dataset}

We implement PASA on Llama-2-13B~\cite{touvron2023llama} and Mixtral-8$\times$7B~\cite{jiang2023clip}.
For black-box detection, we use smaller proxy SLMs (Llama-2-7B and Mistral-7B, respectively).
All experiments are conducted on \texttt{realnewslike} from C4~\cite{raffel2020exploring}. We additionally evaluate generalization on the long-form QA dataset \texttt{ELI5} (see Appendix~\ref{sec:ELI5}).
\paragraph{Attacks.} 
\label{sec: attack scenarios}
We evaluate robustness under two semantic-invariant paradigms: (i) contextual token replacement using T5-Large/T5-XXL~\cite{raffel2020exploring} with mask ratio $r\in\{0.3,0.5\}$; (ii) paraphrasing using DIPPER~\cite{krishna2023paraphrasing} with three intensities by varying lexical and word-order diversity $(L, O)$. Detailed configurations and hyperparameters are provided in Appendix~\ref{app: implementation_details}.

\paragraph{Evaluation Metrics.}

\label{sec: evaluation metrics}
We report AUROC and TPR at low FPR (e.g., TPR@1\%FPR).
We compare against KGW~\cite{pmlr-v202-kirchenbauer23a}, Exp-Edit~\cite{kuditipudi2023robust}, AWTI~\cite{liu2024adaptive}, and DAWA~\cite{he2025theoretically}.
We evaluate text quality via PPL using a fixed \texttt{GPT-NeoX-20B} evaluator~\cite{black2022gpt}, and report average generation/detection latency per sample.

\subsection{Main Results}
\label{sec: main results}
\begin{figure*}[t]
    \centering
    \includegraphics[width=0.95\textwidth]{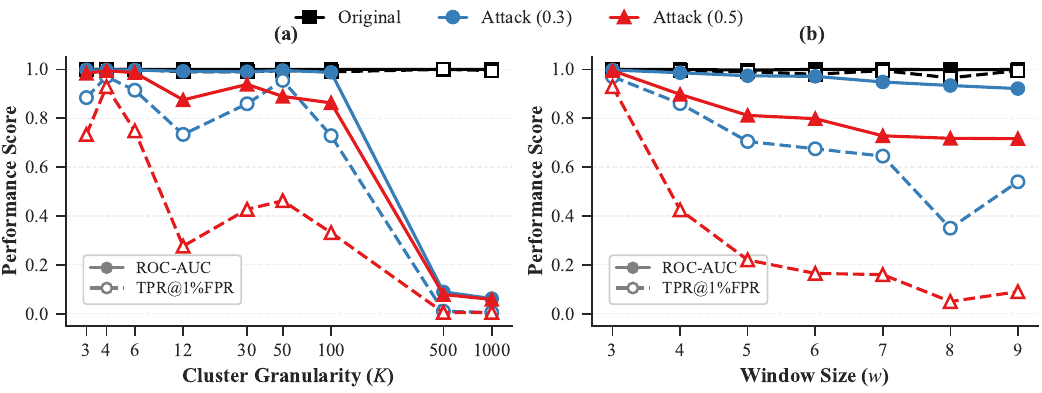}
    \caption{\textbf{Ablation study on hyper-parameters.} (a) Impact of semantic cluster granularity ($K$) on robustness across log-scale cluster counts. (b) Impact of synchronization window size ($w$) on robustness. The plots compare the baseline (Original) against T5-based token replacement attacks ($r=0.3, 0.5$).}
    \label{fig:ablation_study}
\end{figure*}

\paragraph{Clean and Token-Replacement Detection.}
\label{sec: clean and token-replacement detection}
Table~\ref{tab:detection_clean_edited_across_llms} summarizes the detection performance on clean and modified text under T5-based token-level replacement. On clean text, PASA achieves near-perfect detection accuracy across both Llama-2-13B and Mixtral-8$\times$7B, validating its effectiveness in non-adversarial settings.

Under T5-based attacks, standard schemes such as KGW and DAWA degrade substantially due to the sensitivity of token identities. In contrast, PASA maintains competitive stability. Specifically, under the T5-Large attack on Llama-2-13B, PASA achieves a TPR@1\%FPR of 0.9296, significantly outperforming KGW ($0.7350$) and DAWA ($0.3300$). Even under the more aggressive T5-XXL attack, PASA maintains an AUROC of 0.9392, exceeding KGW and matching robust baselines like Exp-edit. On the sparse Mixtral-8$\times$7B model, PASA surpasses Exp-edit under the T5-Large attack with an AUROC of 0.9902. These results confirm that anchoring randomness within the latent semantic space mitigates the state mismatch induced by local perturbations, thereby enhancing watermark survivability.


    

\paragraph{Robustness against Paraphrasing.}
\label{sec: Robustness}

Table~\ref{tab:typedipper} presents detection performance under DIPPER paraphrasing attacks, where lexical substitution is fixed at $\mathrm{LEX}=60$ and word-order perturbation increases across $\mathrm{ORD}\in\{0,20,80\}$. Larger $\mathrm{ORD}$ indicates stronger syntactic reordering, yielding a more challenging semantic-invariant attack. Compared to token-replacement attacks, DIPPER paraphrases induce broader structural variation, which amplifies the performance gap between semantic-level methods and approaches relying on token-level statistics.
PASA achieves the most robust detection at low FPRs across all settings. For $\mathrm{ORD}=0,20,$ and $80$, it attains $\mathrm{TPR}@1\%\mathrm{FPR}$ at $0.5578$, $0.5829$, and $0.5879$, with corresponding AUROCs of $0.8776$, $0.9116$, and $0.8934$. In contrast, token-level baselines degrade sharply under paraphrasing. In particular, at $\mathrm{ORD}=80$, the TPR1@\%FPR of DAWA drops to only $0.0200$ and KGW to only $0.3050$. 
Moreover, methods designed specifically for robust editing, Exp-edit and AWTI, also deteriorate under strong reordering, with $\mathrm{TPR}@1\%\mathrm{FPR}$ dropping to $0.1150$ and $0.1350$, respectively. 
Overall, these results indicate that synchronizing shared randomness at the semantic level enables PASA to better withstand the desynchronization induced by meaning-preserving paraphrases, especially word reordering, thereby improving watermark robustness. 
Further robustness studies and comparisons with representative robust watermarking methods are presented in Appendix~\ref{sec:additional_robustness}, which consistently support the above observations.
\begin{table}[H]
\caption{\textbf{Comparison of generation quality and computational efficiency.} We report Perplexity (PPL) on GPT-NeoX-20B~\cite{black2022gpt} to validate the \textbf{distortion-free} property, alongside the average Generation Time and Detection Time per sample.}
\label{tab:perplexity_comparison}
    \centering
    \renewcommand{\arraystretch}{1.0} 
    \setlength{\tabcolsep}{8pt} 

    \begin{small}
        \begin{tabular}{lccc}
            \toprule
            Method & PPL  & Gen. Time (s)  & Det. Time (s)  \\
            \midrule
            Human       & 10.41   & \textendash & \textendash \\
            Unwatermark & 12.41  & 12.93       & \textendash \\
            \midrule
            KGW+23      & 11.81  & \textendash & \textbf{0.04} \\
            Exp-edit    & 23.40  & \textendash & 2.41 \\
            AWTI        & 19.77   & 24.24       & 10.52 \\
            DAWA        & \textbf{8.41}  & 13.56       & 0.31 \\
            \textbf{Ours} & 11.44 & \textbf{13.35} & 0.27 \\
            \bottomrule
        \end{tabular}
    \end{small}
\end{table}
\subsection{Quality and Efficiency}
\label{sec: quality and efficiency}
\paragraph{Text Quality.} 
\label{sec: text quality}
Table~\ref{tab:perplexity_comparison} provides a comparative analysis of the perplexity (PPL) for text generated by PASA relative to the unwatermarked baseline and prior methods, including KGW, DAWA, Exp-edit, and AWTI. Theoretically, the generation mechanism of PASA strictly preserves the per-step NTP distribution of the underlying model in expectation, ensuring the output is distortion-free. To provide a more concrete illustration of the generated content, we also present several examples of PASA-watermarked text in Appendix~\ref{app:watermarked_examples}.

The empirical results are consistent with this theoretical guarantee. PASA achieves a PPL of $11.44$, which remains close to the unwatermarked baseline ($12.41$) and human text ($10.41$), indicating that the watermark introduces little degradation to generation quality. PASA also achieves comparable perplexity to KGW ($11.81$), while substantially outperforming Exp-edit ($23.40$) and AWTI ($19.77$), both of which exhibit much higher PPL values. Although DAWA reports the lowest PPL ($8.41$), PASA  maintains a text quality statistically closer to the original model than DAWA. Overall, these results confirm that PASA effectively preserves generation quality and adheres to the distortion-free property. Importantly, PASA remains effective for long-form generations as well; see Appendix~\ref{sec:length_analysis}.


\paragraph{Computational Efficiency.}
\label{sec: computational efficiency}
To quantify runtime overhead, we measure average latency for watermarked generation relative to an unwatermarked baseline. For each configuration, we generate $200$ sequences of fixed length $300$ tokens. As reported in Table~\ref{tab:perplexity_comparison}, PASA incurs a marginal increase in generation latency ($12.93\,s$ $\rightarrow$ $13.35\,s$; $<0.5$\,s). This suggests that the cost of semantic clustering and distribution computation is negligible compared to autoregressive decoding.
PASA is also efficient relative to prior methods. Its generation latency is comparable to DAWA ($13.56\,s$) and substantially lower than AWTI ($24.24\,s$). For detection, PASA achieves the lowest latency ($0.27\,s$), outperforming Exp-edit ($2.41\,s$) and AWTI ($10.52\,s$). We note that KGW achieves the fastest detection (0.04s) due to its simple token-level, count-based detector with negligible computation. 
While lightweight designs are brittle under semantic-invariant edits, PASA maintains low detection latency without sacrificing robustness. Although runtimes are not directly comparable in Table~\ref{tab:perplexity_comparison} due to pipeline differences, separate matched evaluations show that Exp-edit requires approximately $1.5\times$ the total runtime of PASA. Overall, PASA remains computationally efficient and practical for deployment.


\subsection{Ablation Study}
\label{sec: ablation_study}
\paragraph{Semantic Cluster Granularity.}
Our proposed PASA algorithms rely on the defined semantic mapping function $f$ that partitions the embedding space into $K$ clusters. Both theoretically (cf.~Theorem \ref{thm: min MD error}) and empirically, $K$ determines the robustness level and governs the trade-off between robustness and detection accuracy. 
Figure~\ref{fig:ablation_study} illustrates detection accuracy across a range of $K$, showing that PASA remains robust with coarse-to-moderate partitions while robustness degrades when the partition becomes excessively fine-grained. First of all, we notice that detection on clean text remains near-perfect for all $K$, which means that the choice of $K$ only affects the robustness of PASA. For $K \in [3, 100]$, robustness against T5-based token-replacement attacks is well preserved. Empirically, we observe that $K=4$ achieves the best overall performance across all evaluation metrics, and we therefore adopt $K=4$ in our experimental setting.
However, as $K$ exceeds $500$, PASA's detection accuracy on modified text degrades rapidly. This degradation arises from two aspects: 1) as predicted by the fundamental robustness-detection accuracy trade-off in Theorem \ref{thm: min MD error}, finer semantic clustering causes semantic-level watermarking to gradually revert back to token-level behavior; 2) the random seed generation becomes more fragile: since the seed is derived from the semantic clusters of previous tokens, finer clustering increases sensitivity to perturbations, making seed recovery less reliable under attack. 
\paragraph{Synchronization Window Size.}
We further examine the influence of the synchronization window size $w$, which determines how much recent semantic context is used to generate the seed to sample an auxiliary random sequence. As shown in Figure~\ref{fig:ablation_study}, robustness exhibits an inverse relationship with $w$, revealing a trade-off between contextual aggregation and synchronization stability.
Under severe token-replacement attacks at rate $r=0.5$, as the window expands from $w=3$ to $w=8$, $\mathrm{TPR}@1\%\mathrm{FPR}$ decreases from $0.7236$ to $0.1508$. This indicates that a longer semantic context window increases seed sensitivity to token perturbations and thus impacts the detection accuracy under attacks. 
However, an overly small $w$ may impair generated text coherence, since the generation pseudo-randomness is only determined by a small set of semantic cluster combinations.

Overall, these ablation studies help identify suitable hyperparameters that strike a sweet spot between robustness and text quality, yielding semantic clusters that are coarse enough to remain invariant under paraphrasing while being sufficiently local to maintain seed synchronization. 

%% file: sec/6_conclusion.tex
\section{Conclusion}


In this paper, we presented \textsc{PASA}, a principled watermarking algorithm that operates in a latent embedding space to enhance the robustness of LLM watermarking against semantic-invariant attacks. By partitioning the latent space into disjoint semantic clusters and employing a sampling mechanism synchronized by a secret key and an auxiliary random sequence, PASA establishes shared randomness at the semantic level, which is the key to its robustness during detection. Our design is grounded in a theoretical characterization that identifies a jointly optimal embedding-detection pair at the sequence level, revealing the fundamental trade-off among detection accuracy and robustness. We also prove that this approach is distortion-free, as it strictly preserves the model's original generation distribution. Extensive experiments, including cross-model evaluations, demonstrate that PASA maintains robust detectability against token replacement and paraphrasing attacks without compromising text quality. These findings validate that a principled semantics-aware design greatly improves the effectiveness of LLM watermarking, suggesting directions for further improving robustness and enhancing generalization across diverse generative models. 

\section{Limitations}
\label{sec:limitations}

\paragraph{Limitations.}
Our method still has several limitations. First, PASA may degrade under very strong rewriting or watermark-removal attacks that substantially change both the semantic content and the distributional structure of the text. Incorporating richer contextual or sentence-level semantics may further improve robustness, but would also increase modeling complexity.

Second, detection on very short texts remains challenging. Since our detector aggregates token-level statistical evidence, short sequences provide fewer observations and therefore weaker detection confidence. This issue may be alleviated by combining token-level evidence with sentence-level or passage-level statistics.

Third, PASA is most effective when the detector-side SLM is tokenizer-compatible with the generation model. Tokenizer mismatch can weaken the consistency between watermark embedding and detection, reducing cross-family transferability. A practical solution is to deploy multiple lightweight SLM detectors from different candidate model families, where a high-confidence response from one detector can verify the watermark and suggest the likely source model family.
\section*{Acknowledgments}

This work was supported by the Guangdong Provincial Key Lab of Integrated Communication, Sensing and Computation for Ubiquitous Internet of Things (No. 2023B1212010007).

%% file: sec/7_impact_statements.tex
\section*{Impact Statement}

As generative models become deeply integrated into society, the ability to distinguish machine-generated text from human-authored content is essential for mitigating misinformation, ensuring academic integrity, and protecting intellectual property. Implementing watermarks at the semantic level enhances content traceability against strong paraphrasing attacks, providing a reliable tool for AI governance. From an ethical perspective, watermarking techniques could be misused to track individual writing styles, raising potential privacy concerns. It is therefore critical to establish responsible deployment guidelines that balance safety auditing with the protection of user anonymity to foster a transparent and trustworthy ecosystem for generative artificial intelligence.

%% file: sec/8_appendix.tex
\section*{Appendix}
\section{Additional Experimental Results}

\paragraph{Impact of Generated Text Length on Detection Performance.}
\label{sec:length_analysis}
The reliability of watermark detection fundamentally depends on the volume of statistical information available within an observed sequence. The evaluation of PASA across varying text lengths reveals a high degree of efficiency in low-resource scenarios. At a sequence length of only 50 tokens, the ROC-AUC already exceeds 0.95, indicating that shared randomness anchored at the semantic level provides significant discriminative power even within a minimal context. As the length extends to 300 tokens, stringent metrics such as the TPR@1\%FPR rapidly converge toward 1.0, a trend characterized in Figure~\ref{fig:length_analysis}. This efficient convergence is attributed to the stability of semantic clusters, which mitigates the impact of local token-level noise and allows the watermark signal to reach statistical significance within short durations. These findings confirm that PASA is highly practical for real-world applications involving short-form content or latency-sensitive generations.

\begin{figure}[H]
    \centering
    \includegraphics[width=.6\columnwidth]{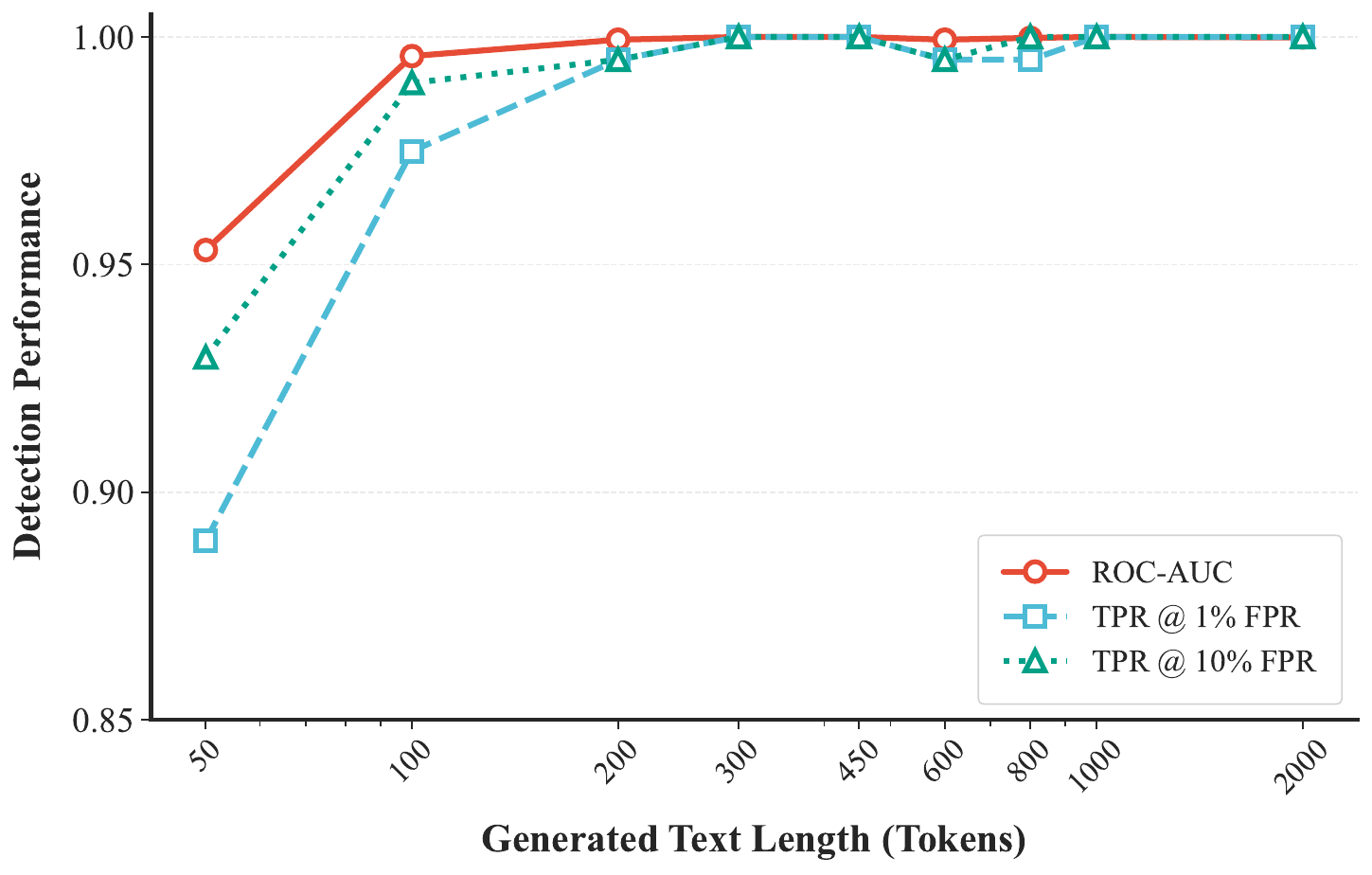}
    \caption{\textbf{Detection performance across various generated text lengths.} The ROC-AUC and True Positive Rate (TPR) exhibit rapid convergence, achieving near-perfect detection beyond 300 tokens.}
    \label{fig:length_analysis}
\end{figure}

\paragraph{Generalization Analysis on the ELI5 Dataset.}
\label{sec:ELI5}
The ELI5 dataset is designed for long-form question answering, requiring models to produce detailed explanations for complex queries.
We use this dataset to evaluate the generalization of \textsc{PASA} in linguistic contexts beyond standard news benchmarks.
As shown in Table~\ref{tab:detection_clean_edited_eli5}, \textsc{PASA} maintains near-perfect detection accuracy on clean text, matching state-of-the-art baselines.
Under the T5-Large token-level replacement attack, \textsc{PASA} exhibits superior robustness, achieving the highest ROC-AUC ($0.9980$) and TPR@1\%FPR ($0.9750$).
In contrast, methods such as \textsc{DAWA} degrade substantially under attack, whereas \textsc{PASA} effectively mitigates the state mismatch induced by local edits.
These results suggest that anchoring shared randomness in the latent semantic space yields stable watermark survivability across generation tasks and data distributions, supporting the broad applicability of our framework.

\begin{table*}[!b]
\centering
\caption{
\textbf{Detection performance on ELI5 dataset under token-replacement attacks.}
Comparisons of ROC-AUC, TPR@1\%FPR, and TPR@10\%FPR using the LLAMA-13B-hf architecture.
For token-replacement attacks with replacement ratio $r=0.5$, we use T5-Large as the attacker.
\best{Best}, \second{Second Best}, and \third{Third Best} results are marked in each column.
}
\label{tab:detection_clean_edited_eli5}

\setlength{\tabcolsep}{4pt}
\renewcommand{\arraystretch}{1.15}
\resizebox{0.95\linewidth}{!}{
\begin{tabular}{llcccccc}
\toprule
\multirow{3}{*}{LLM}
& \multirow{3}{*}{Method}
& \multicolumn{3}{c}{Clean Text (ELI5)}
& \multicolumn{3}{c}{T5-Large Attack} \\
\cmidrule(lr){3-5}
\cmidrule(lr){6-8}
& & ROC-AUC$\uparrow$
& TPR@1\%FPR$\uparrow$
& TPR@10\%FPR$\uparrow$
& ROC-AUC$\uparrow$
& TPR@1\%FPR$\uparrow$
& TPR@10\%FPR$\uparrow$ \\
\midrule

\multirow{4}{*}{LLAMA-13B}
& DAWA
& \best{1.0000}
& \best{1.0000}
& \best{1.0000}
& 0.8784
& 0.5150
& 0.6900 \\

& Exp-edit
& \second{0.9908}
& \second{0.9900}
& \second{0.9900}
& \third{0.9737}
& \second{0.9650}
& \third{0.9650} \\

& AWTI
& \best{1.0000}
& \best{1.0000}
& \best{1.0000}
& \second{0.9975}
& \third{0.9550}
& \best{1.0000} \\

& Ours
& \best{1.0000}
& \best{1.0000}
& \best{1.0000}
& \best{0.9980}
& \best{0.9750}
& \second{0.9950} \\
\bottomrule
\end{tabular}}
\end{table*}

\paragraph{Additional Comparison under Diverse Paraphrasing Attacks.}
\label{sec:additional_robustness}

To further evaluate the robustness of PASA, we extend the comparison to additional watermarking baselines and stronger paraphrasing attacks. Specifically, we include SIR~\cite{liu2024semantic}, a representative semantic watermarking method, and SynthID-Text~\cite{dathathri2024scalable}, a representative distortion-free watermarking scheme. We evaluate all methods under three paraphrasing-based attacks, including DIPPER, OPT-2.7B paraphrasing~\cite{zhang2022opt}, and WM-removal. Following the main experiments, we report ROC-AUC, TPR@1\%FPR, and TPR@10\%FPR.

Table~\ref{tab:additional_robustness} presents the results. Under the no-attack setting, PASA achieves perfect detection performance, matching the strongest baselines. Under semantic-invariant attacks, PASA consistently achieves the highest ROC-AUC and TPR@1\%FPR across all attack settings, demonstrating stronger robustness in the stricter low-false-positive regime. In particular, under OPT-2.7B paraphrasing and WM-removal, PASA achieves ROC-AUC scores of $0.9931$ and $0.9972$, with TPR@1\%FPR of $0.9146$ and $0.9598$, respectively. These results indicate that PASA preserves a more stable detection signal under diverse meaning-preserving transformations. Although SIR obtains a higher TPR@10\%FPR under DIPPER and a marginally higher TPR@10\%FPR under WM-removal, PASA provides consistently stronger performance at TPR@1\%FPR, which is more critical for reliable watermark detection in practical low-FPR scenarios.

\begin{table*}[t]
\centering
\caption{
\textbf{Additional robustness comparison under diverse paraphrasing attacks.}
We compare ROC-AUC, TPR@1\%FPR, and TPR@10\%FPR under the clean setting and three semantic-invariant attacks, including DIPPER, OPT-2.7B paraphrasing, and WM-removal.
\textbf{Best}, \second{Second Best}, and \third{Third Best} results are marked when applicable.
}
\label{tab:additional_robustness}
\small
\setlength{\tabcolsep}{6pt}
\renewcommand{\arraystretch}{1.12}
\begin{tabular}{llcccc}
\toprule
Attack setting & Metric & AWTI & SIR & SynthID-Text & Ours \\
\midrule

\multirow{3}{*}{No attack}
& ROC-AUC$\uparrow$
& \textbf{1.0000} & \textbf{1.0000} & 0.9871 & \textbf{1.0000} \\
& TPR@1\%FPR$\uparrow$
& \textbf{1.0000} & \textbf{1.0000} & 0.9400 & \textbf{1.0000} \\
& TPR@10\%FPR$\uparrow$
& \textbf{1.0000} & \textbf{1.0000} & 0.9900 & \textbf{1.0000} \\
\midrule

\multirow{3}{*}{DIPPER}
& ROC-AUC$\uparrow$
& \third{0.7391} & \second{0.8809} & 0.6228 & \textbf{0.8956} \\
& TPR@1\%FPR$\uparrow$
& \third{0.1350} & \second{0.2900} & 0.0200 & \textbf{0.5377} \\
& TPR@10\%FPR$\uparrow$
& \third{0.3600} & \textbf{0.8809} & 0.2700 & \second{0.7593} \\
\midrule

\multirow{3}{*}{OPT-2.7B}
& ROC-AUC$\uparrow$
& \second{0.9659} & \third{0.9623} & 0.9483 & \textbf{0.9931} \\
& TPR@1\%FPR$\uparrow$
& \second{0.8400} & \third{0.5950} & 0.4350 & \textbf{0.9146} \\
& TPR@10\%FPR$\uparrow$
& \third{0.9150} & \second{0.9350} & 0.8650 & \textbf{0.9749} \\
\midrule

\multirow{3}{*}{WM-removal}
& ROC-AUC$\uparrow$
& 0.9635 & \second{0.9908} & \third{0.9772} & \textbf{0.9972} \\
& TPR@1\%FPR$\uparrow$
& \third{0.7050} & \second{0.8550} & 0.6500 & \textbf{0.9598} \\
& TPR@10\%FPR$\uparrow$
& 0.8450 & \textbf{0.9850} & \third{0.9600} & \second{0.9849} \\

\bottomrule
\end{tabular}
\end{table*}

\paragraph{Detection with and without Prompts.}
\label{sec:prompt_detection}

Since prompts play a central role in conditioning LLM generation~\cite{feng2025can,zhu2025obs,jin2025mergemix} and can substantially affect model behavior and evaluation outcomes~\cite{tao2026lvomnibench,zhang2025poison}, we further clarify the detection setting used in our experiments. Unless otherwise stated, detection is performed only on the generated continuation, without including the input prompt, for both C4 and ELI5. This setting avoids introducing prompt-specific artifacts into the detector and ensures that the reported performance reflects the detectability of the watermarked generation itself.

We additionally evaluate a more conservative setting where the human-written prompt is prepended to the watermarked continuation before detection. This setting simulates a simple mixed-text scenario in which unwatermarked human text appears before the watermarked passage, thereby diluting the watermark signal available to the detector. As shown in Table~\ref{tab:prompt_detection}, PASA remains reliably detectable even when the prompt is included. On C4, PASA achieves an ROC-AUC of $0.9997$ and a TPR@1\%FPR of $0.9899$, while maintaining a TPR@10\%FPR of $1.0000$. On ELI5, detection performance remains perfect under both settings. These results indicate that PASA is robust to prompt prepending and that its detection signal primarily comes from the generated continuation rather than prompt-specific artifacts. 

\begin{table*}[t]
\centering
\caption{
\textbf{Detection performance with and without prompts.}
We compare ROC-AUC, TPR@1\%FPR, and TPR@10\%FPR on C4 and ELI5.
``Without prompt'' denotes detection on the generated continuation only, while ``Mixed with prompt'' denotes detection after prepending the human-written prompt to the watermarked continuation.
}
\label{tab:prompt_detection}
\footnotesize
\renewcommand{\arraystretch}{1.08}
\begin{tabular*}{\textwidth}{@{\extracolsep{\fill}}lcccccc@{}}
\toprule
\multirow{2}{*}{Setting}
& \multicolumn{3}{c}{C4}
& \multicolumn{3}{c}{ELI5} \\
\cmidrule(lr){2-4}
\cmidrule(lr){5-7}
& ROC-AUC$\uparrow$
& TPR@1\%FPR$\uparrow$
& TPR@10\%FPR$\uparrow$
& ROC-AUC$\uparrow$
& TPR@1\%FPR$\uparrow$
& TPR@10\%FPR$\uparrow$ \\
\midrule

Without prompt
& \textbf{1.0000}
& \textbf{1.0000}
& \textbf{1.0000}
& \textbf{1.0000}
& \textbf{1.0000}
& \textbf{1.0000} \\

Mixed with prompt
& 0.9997
& 0.9899
& \textbf{1.0000}
& \textbf{1.0000}
& \textbf{1.0000}
& \textbf{1.0000} \\

\bottomrule
\end{tabular*}
\end{table*}

\paragraph{Robustness under Surrogate LM Mismatch.}
\label{sec:slm_mismatch}

We further evaluate the robustness of PASA under surrogate language model (SLM) mismatch. Specifically, we consider both detector-side SLM mismatch and base/instruction-tuned mismatch within the LLAMA-2 family. No per-distribution calibration is used in this experiment.

As shown in Table~\ref{tab:slm_mismatch}, PASA remains highly reliable when the detector-side SLM comes from the same model family or uses a compatible tokenizer. When the generation model is LLAMA-2-13B, both LLAMA-2-7B and TinyLLAMA-1.1B~\cite{zhang2024tinyllama} achieve perfect detection performance, with ROC-AUC, TPR@1\%FPR, and TPR@10\%FPR all reaching $1.0000$. This suggests that preserving tokenizer compatibility and the next-token prediction structure is more important than the scale of the detector-side SLM. Even when using LLAMA-2-7B-chat as the detector-side SLM, PASA still achieves a ROC-AUC of $0.9995$ and a TPR@1\%FPR of $0.9950$.

When the generation model is instruction-tuned, detection performance slightly decreases but remains reliable. For example, when detecting text generated by LLAMA-2-13B-chat, PASA achieves ROC-AUC scores of $0.9817$ and $0.9879$ with LLAMA-2-7B and LLAMA-2-7B-chat as detector-side SLMs, respectively. A possible explanation is that instruction tuning changes the next-token distribution and makes generations more structured and constrained, thereby reducing the effective randomness available to the statistical detector.

Overall, these results suggest that PASA is most effective within the same model family or tokenizer-compatible model families. From a deployment perspective, a practical solution is to maintain several lightweight local detectors corresponding to different candidate model families. A high-confidence detection signal from one detector can both verify the watermark and provide evidence about the likely source LLM family.

\begin{table*}[t]
\centering
\caption{
\textbf{Detection performance under surrogate LM mismatch.}
We evaluate detector-side SLM mismatch and base/instruction-tuned mismatch within the LLAMA-2 family.
ROC-AUC, TPR@1\%FPR, and TPR@10\%FPR are reported.
No per-distribution calibration is used.
}
\label{tab:slm_mismatch}
\footnotesize
\renewcommand{\arraystretch}{1.08}
\begin{tabular*}{\textwidth}{@{\extracolsep{\fill}}llccc@{}}
\toprule
Generation Model
& Detector-side SLM
& ROC-AUC$\uparrow$
& TPR@1\%FPR$\uparrow$
& TPR@10\%FPR$\uparrow$ \\
\midrule

LLAMA-2-13B
& LLAMA-2-7B
& \textbf{1.0000}
& \textbf{1.0000}
& \textbf{1.0000} \\

LLAMA-2-13B
& TinyLLAMA-1.1B
& \textbf{1.0000}
& \textbf{1.0000}
& \textbf{1.0000} \\

LLAMA-2-13B
& LLAMA-2-7B-chat
& 0.9995
& 0.9950
& 0.9950 \\

LLAMA-2-13B-chat
& LLAMA-2-7B
& 0.9817
& 0.8750
& 0.9500 \\

LLAMA-2-13B-chat
& LLAMA-2-7B-chat
& 0.9879
& 0.8800
& 0.9800 \\

\bottomrule
\end{tabular*}
\end{table*}

\paragraph{Computational and Memory Costs.}
\label{sec:cost_analysis}

We further report the memory requirements and computational costs of different watermarking methods during generation and detection. For generation, all methods use the same generation backbone under our setup, which requires 25,376 MB of GPU memory. Therefore, the generation-side memory requirement is essentially identical across methods, and the main difference lies in the detection stage.

As shown in Table~\ref{tab:complexity_cost}, PASA maintains moderate detection-side cost compared with existing watermarking methods. In terms of detection memory, PASA requires 2,892 MB, which is comparable to DAWA and lower than AWTI. In terms of computational cost, PASA requires $6.21 \times 10^{2}$ GFLOPs, which is lower than DAWA, SIR, and AWTI. These results show that PASA achieves strong robustness while maintaining practical detection-side efficiency.

\begin{table}[t]
\centering
\caption{
\textbf{Memory requirements and computational costs.}
All methods use the same generation backbone, which requires 25,376 MB of GPU memory under our setup.
We report detection-side memory usage and GFLOPs.
}
\label{tab:complexity_cost}
\footnotesize
\renewcommand{\arraystretch}{1.08}
\begin{tabular*}{\columnwidth}{@{\extracolsep{\fill}}lccccc@{}}
\toprule
Metric
& DAWA
& EXP-edit
& AWTI
& SIR
& Ours \\
\midrule

Detection Memory (MB)
& 2892
& CPU
& 6344
& 1994
& 2892 \\

Detection GFLOPs
& $1.06{\times}10^{3}$
& $5.12{\times}10^{-5}$
& $2.85{\times}10^{5}$
& $1.32{\times}10^{4}$
& $6.21{\times}10^{2}$ \\

\bottomrule
\end{tabular*}
\end{table}

\section{Related Works}
\label{sec: related works}
\paragraph{LLM Text Watermarking. } 
Prior surveys of watermarking for Large Language Models (LLMs)~\cite{liu2024survey,math13091420} typically categorize methods by whether the watermark is embedded pre-generation or post-generation, and they organize evaluations along dimensions such as detectability, impact on text quality, robustness, and security. Within this taxonomy, in-generation watermarking has emerged as a dominant paradigm due to its direct integration into the decoding and sampling processes, thereby incurring minimal overhead during deployment.

For watermark generation, the classic green-list approach~\cite{pmlr-v202-kirchenbauer23a} uses a secret-key-driven partition of the vocabulary to induce a slight sampling bias toward green tokens, and subsequently applies interpretable statistical tests to compute detection p-values. Another line of work pursues distortion-free (distribution-preserving) embedding by incorporating detectable signals while maintaining the original generation distribution of the model, either implicitly or explicitly. Representative methods align randomness derived from a secret key with the sampling procedure of the language model, enabling detection by re-synchronizing and validating the induced shared randomness. For instance, the Gumbel-Max watermark achieves exact sampling of the next token via the Gumbel--Max trick~\cite{gumbel2023,gumbel1954statistical}, whereas an inverse transform construction provides an alternative instantiation of exact sampling~\cite{kuditipudi2023robust}.


Production-oriented watermarking systems have advanced rapidly in recent years. For example, SynthID Text~\cite{dathathri2024scalable} targets production readiness by modifying only the sampling procedure (without retraining) while maintaining efficient detection and low-latency overhead. Concurrently, prior work has addressed the degradation of diversity induced by decoding-based watermarks, proposing Gumbel–Max variants~\cite{fu-etal-2024-gumbelsoft} that better balance generative diversity and detectability. Furthermore, investigations into the learnability of watermarks~\cite{gu2024on} indicate that models are capable of distilling watermarking behavior, thereby enabling the generation of watermarked text. While this phenomenon supports watermarking in open-source environments, it simultaneously elevates the risk of watermark forgery by adversaries, which could facilitate attribution attacks.

Methodologically, in contrast to token-based approaches, we elevate both the embedding and verification units from individual tokens to semantic clusters in the embedding space. By constructing detection statistics from cluster-level shared randomness, our design directly targets robustness to semantics-preserving rewriting perturbations.

\paragraph{Theoretical Works. } 
Beyond empirical heuristics, a growing body of work characterizes watermark detectability, quality, and robustness through statistical testing and formal analysis. The foundational Green-list approach~\cite{pmlr-v202-kirchenbauer23a} not only introduced detection statistics and associated p-values, but also analyzed how detection sensitivity varies with generation uncertainty, establishing a widely used analytical baseline. Building on this line, Unigram Watermark~\cite{zhao2024provable} proposed a rigorous framework for quantifying validity and robustness, providing provable guarantees under perturbations such as random edits and paraphrasing. More recently, DAWA~\cite{he2025theoretically} has emphasized the construction of distribution-adaptive and distortion-free schemes motivated by theoretical optimality. By leveraging surrogate models to enable model-agnostic detection, DAWA has demonstrated robust performance, particularly within the regime of ultra-low false positive rates. Theoretically, our work extends this framework by specifically modeling semantic-invariant attacks to incorporate robustness into design.

\paragraph{Robustness and Attacks. } 
Robustness remains a critical bottleneck for the real-world deployment of text watermarking. Attacks that preserve semantics, such as controlled token replacement and paraphrasing, can substantially alter surface token sequences, causing rapid signal decay in methods that treat individual tokens as the fundamental unit. While studies on distortion-free watermarking~\cite{kuditipudi2023robust} suggest resilience to random edits (substitutions, insertions, deletions) and mild automated rewriting, they also indicate that low-entropy generation or aggressive paraphrasing can severely compromise detection efficacy. 
From the perspective of trade-offs, WaterMax~\cite{giboulot2024watermax} targets a joint balance among detectability, robustness, and quality, showing that strong performance can be achieved without modifying model weights or sampling mechanisms. More recently, SEEK~\cite{shen2025enhancing} identified a trade-off between scrubbing and spoofing attacks driven by window size, proposing ``equivalent texture keys'' and redundancy mechanisms to strengthen defenses against both threats. Closely related to robustness is the risk of forgery: studies concerning the learnability of watermarks~\cite{gu2024on} suggest that adversaries can train models to generate text that detectors accept as watermarked, posing a significant spoofing threat. For evaluation, we employ strong semantic-preserving attacks, including T5-based replacement~\cite{raffel2020exploring} and DIPPER paraphrasing~\cite{krishna2023paraphrasing}, focusing on detection performance under these attacks at low false positive rates (FPR).

\section{Proof of Theorem \ref{thm: min MD error}}
\label{App: pf of thm: min MD error}

According to the worst-case FA error constraint, we have $\forall x^T\in \calV^T$,
\begin{align}
    \alpha&\geq \max_{Q_{X^T}}\bbE_{Q_{X^T}\otimes P_{\zeta^T}}\left[\sup_{\tilx^T\in \calB_f(X^T)}\mathbbm{1}\{\gamma(\tilx^T,\zeta^T )=1\} \right]\\
    &\geq \bbE_{\delta_{x^T}\otimes P_{\zeta^T}}\left[\sup_{\tilx^T\in \calB_f(X^T)}\mathbbm{1}\{\gamma(\tilx^T,\zeta^T )=1\} \right]=\bbE_{P_{\zeta^T}}\left[\sup_{\tilx^T\in \calB_f(x^T)}\gamma(\tilx^T,\zeta^T ) \right]\\
    &=\sum_{\zeta^T}P_{\zeta^T}(\zeta^T)\sup_{\tilx^T\in \calB_f(x^T)}\gamma(\tilx^T,\zeta^T ).
\end{align}

For brevity, let $\calB(k)\coloneqq\calB_f(x^T)$ if $f(x^T)=k$.
The MD error is equal to $1-\bbE_{P_{X^T,\zeta^T}}[\inf_{\tilx^T\in \calB_f(X^T)}\gamma(\tilx^T,\zeta^T ) ]$. Thus, to lower bound the MD error, we first upper bound the second term
\begin{align}\label{Eq: 1-robustTypeII error}
    &\bbE_{P_{X^T,\zeta^T}}\left[\inf_{\tilx^T\in \calB_f(X^T)}\gamma(\tilx^T,\zeta^T) \right]\leq \bbE_{P_{X^T,\zeta^T}}\left[\sup_{\tilx^T\in \calB_f(X^T)}\gamma(\tilx^T,\zeta^T) \right]\\
    &=\sum_{k\in[K]}\underbrace{\sum_{x^T:f(x^T)=k}\sum_{\zeta^T}P_{X^T,\zeta^T}(x^T,\zeta^T) \sup_{\tilx^T\in \calB_f(x^T)}\gamma(\tilx^T,\zeta^T)}_{C(k)},
\end{align}
where according to the FA error constraint, for all $k\in[K]$, 
\begin{equation*}
    C(k)\leq \sum_{x^T:f(x^T)=k}P_{X^T}(x^T), \quad \text{and}
\end{equation*}
\begin{align*}
    C(k)&=\sum_{\zeta^T}P_{\zeta^T}(\zeta^T)\sum_{x^T:f(x^T)=k}P_{X^T|\zeta^T}(x^T|\zeta^T)\sup_{\tilx^T\in \calB(k)}\gamma(\tilx^T,\zeta^T) \\
    &\leq \sum_{\zeta^T}P_{\zeta^T}(\zeta^T) \sup_{\tilx^T\in \calB(k)}\gamma(\tilx^T,\zeta^T)\leq \alpha.
\end{align*}
Therefore, 
\begin{align}
    &\bbE_{P_{X^T,\zeta^T}}\left[\inf_{\tilx^T\in \calB(f(X^T))}\gamma(\tilx^T,\zeta^T) \right]\leq \sum_{k\in[K]}C(k) \\
    &\leq \sum_{k\in[K]}\bigg(\bigg(\sum_{x^T:f(x^T)=k}P_{X^T}(x^T)\bigg) \wedge \alpha \bigg)=1-\sum_{k\in[K]}\bigg(\bigg(\sum_{x^T:f(x^T)=k}P_{X^T}(x^T)\bigg)- \alpha \bigg)_+, \label{Eq: 1-robustTypeI error ub}
\end{align}
where \eqref{Eq: 1-robustTypeI error ub} is maximized by taking
\begin{align}
    P_{X^T}=P_{X^T}^*\coloneqq \argmin_{P_{X^T}: \rvD(P_{X^T},Q_{X^T})\leq \epsilon} \sum_{k\in[K]}\bigg(\bigg(\sum_{x^T:f(x^T)=k}P_{X^T}(x^T)\bigg)- \alpha \bigg)_+.
\end{align}
Finally, the MD error is lower bounded by
\begin{align}
     \beta_1^f(\gamma,P_{X^T,\zeta^T})&=1-\bbE_{P_{X^T,\zeta^T}}\left[\inf_{\tilx^T\in \calB_f(X^T)}\gamma(\tilx^T,\zeta^T ) \right]\\
     &\geq \sum_{k\in[K]}\bigg(\bigg(\sum_{x^T:f(x^T)=k}P_{X^T}^*(x^T)\bigg)- \alpha \bigg)_+\\
     &=\min_{P_{X^T}: \rvD(P_{X^T},Q_{X^T})\leq \epsilon} \sum_{k\in[K]}\bigg(\bigg(\sum_{x^T:f(x^T)=k}P_{X^T}(x^T)\bigg)- \alpha \bigg)_+.
\end{align}
In the next section, we prove that there exists a watermark embedding-detection pair that achieves this lower bound. Therefore, this lower bound is the optimal objective value of the optimization problem \eqref{Eq: opt}:
\begin{align}
     \beta_1^{f,*}&\coloneqq \min_{P_{X^T}: \rvD(P_{X^T},Q_{X^T})\leq \epsilon} \sum_{k\in[K]}\bigg(\bigg(\sum_{x^T:f(x^T)=k}P_{X^T}(x^T)\bigg)- \alpha \bigg)_+. \label{Eq: robustType-II LB}
\end{align}
When $\epsilon=0$, it becomes the minimum MD error for a distortion-free watermarking scheme:
\begin{align}
     \beta_1^{f,*}(\epsilon=0)&\coloneqq  \sum_{k\in[K]}\bigg(\bigg(\sum_{x^T:f(x^T)=k}Q_{X^T}(x^T)\bigg)- \alpha \bigg)_+. \label{Eq: robustType-II LB}
\end{align}

\section{Proof of Theorem \ref{thm: optimal scheme}}\label{App: pf of thm: optimal scheme}
\begingroup
\def\thetheorem{\ref{thm: optimal scheme}}
\begin{theorem}[(Formal) Jointly Optimal Watermark Embedding and Detection under $f$ Attack] 
    Let $\Gamma^*_f$ be a collection of detectors that accept the form 
\begin{align}
    \gamma(X^T,\zeta^T)= \mathbbm{1}\{f(X^T)=\vton(\zeta^T)\}
\end{align}
where $\vton:\calZ^T\to[K]\cup \{\tilde{\zeta}\}$ is a bijective function that maps a sequence to a real number and $\tilde{\zeta}\in\bbN\setminus[K]$ is called the overflow state.

For any detector $\gamma\in\Gamma^*_f$, the corresponding distortion-free and robust watermark embedding method  $P_{X^T,\zeta^T}^{*}$ that together achieves the minimum MD error attained from  \eqref{Eq: opt} reaches $\beta_1^{f,*}$ in Theorem \ref{thm: min MD error} is given as follows: 
\begin{enumerate}[leftmargin=*]
    \item the auxiliary sequence distribution $P_{\zeta^T}$:
    \begin{align}
    \begin{cases}
    \label{eq:pi_k}
    P^{*}_{\zeta^T}(\zeta^T)= \sum_{k\in[K]}\mathbbm{1}\{k=\vton(\zeta^T)\}\bigg(\sum_{x^T\in\mathcal{V}^T:f(x^T)=k} Q_{X^T}(x^T) \bigg) \wedge \alpha, \quad  &\forall \zeta^T \text{s.t.}~\vton(\zeta^T)\ne \tilde{\zeta}, \\
    P^{*}_{\zeta^T}(\zeta^T)=\sum_{k\in[K]}\bigg(\sum_{x^T:f(x^T)=k}Q_{X^T}(x^T)-\alpha \bigg)_+, &~~\text{if}~~\vton(\zeta^T)=\tilde{\zeta};
    \end{cases}
\end{align}

    \item the conditional token sequence distribution $P_{X^T|\zeta^T}^*$: for any $x^T\in\calX^T$,
    \begin{align}
    \begin{cases}
    \label{eq:cond_k}
    P^{*}_{X^T|\zeta^T}(x^T|\zeta^T)=\frac{\mathbbm{1}\{f(x^T)=\vton(\zeta^T)\}Q_{X^T}(x^T)}{\sum_{x^T:f(^T)=k}Q_{X^T}(x^T)}, &\quad \forall \zeta^T \text{s.t.}~\vton(\zeta^T)\ne \tilde{\zeta}, \\
    P^{*}_{X^T|\zeta^T}(x^T|\zeta^T)=\frac{Q_{X^T}(x^T)}{\sum_{v^T:f(v^T)=f(x^T)}Q_{X^T}(v^T)}\frac{(\sum_{v^T:f(v^T)=f(x^T)}Q_{X^T}(v^T)-\alpha)_+}{\sum_{k\in[K]}(\sum_{v^T:f(v^T)=k}Q_{X^T}(v^T)-\alpha)_+}, &\quad \text{if }~ \vton(\zeta^T)=\tilde{\zeta}.
    \end{cases}
\end{align}
    
    

\end{enumerate}
\end{theorem}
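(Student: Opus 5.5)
We will verify directly that the pair $(\gamma,P^*_{X^T,\zeta^T})$ with $\gamma\in\Gamma^*_f$ is feasible for \eqref{Eq: opt} with $\epsilon=0$, and that its MD error equals the lower bound $\beta_1^{f,*}(\epsilon=0)$ from Theorem \ref{thm: min MD error}. Optimality then follows at once, since that theorem already shows no feasible pair can do better. Throughout, write $q_k\coloneqq\sum_{x^T:f(x^T)=k}Q_{X^T}(x^T)$. Because $\vton$ is a bijection, $\zeta^T$ is identified with a label in $[K]\cup\{\tilde\zeta\}$, so $P^*_{\zeta^T}$ puts mass $q_k\wedge\alpha$ on label $k$ and mass $\sum_k(q_k-\alpha)_+$ on $\tilde\zeta$.

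\textbf{Step 1: validity and distortion-freeness.} First we check that $P^*_{\zeta^T}$ sums to one, using $(q\wedge\alpha)+(q-\alpha)_+=q$ and $\sum_k q_k=1$. Each conditional in \eqref{eq:cond_k} is a probability distribution: the in-cluster branch is a renormalization, and the overflow branch sums over clusters with weights $(q_k-\alpha)_+/\sum_j(q_j-\alpha)_+$. For the marginal, fix $x^T$ with $f(x^T)=k$. The in-cluster branch contributes $(q_k\wedge\alpha)\,Q_{X^T}(x^T)/q_k$. The overflow branch contributes $\sum_j(q_j-\alpha)_+\cdot\frac{Q_{X^T}(x^T)}{q_k}\frac{(q_k-\alpha)_+}{\sum_j(q_j-\alpha)_+}=(q_k-\alpha)_+Q_{X^T}(x^T)/q_k$. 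These sum to $Q_{X^T}(x^T)$, so $P^*_{X^T}=Q_{X^T}$ and $\rvD=0$. The degenerate cases $q_k=0$ and $\sum_j(q_j-\alpha)_+=0$ will be handled by the convention $0/0=0$, since those branches then carry zero mass.

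\textbf{Step 2: worst-case FA constraint.} The detector depends on $\tilx^T$ only through $f(\tilx^T)$, so $\sup_{\tilx^T\in\calB_f(x^T)}\gamma(\tilx^T,\zeta^T)=\mathbbm{1}\{f(x^T)=\vton(\zeta^T)\}$. Hence, for any $Q'_{X^T}$ (possibly different from the generating $Q$, representing arbitrary human text),
\begin{align}
\beta_0^f=\sum_{x^T}Q'_{X^T}(x^T)\,P^*_{\zeta^T}\big(\vton(\zeta^T)=f(x^T)\big)=\sum_{x^T}Q'_{X^T}(x^T)\,(q_{f(x^T)}\wedge\alpha)\le\alpha .
\end{align}
The overflow state never matches, because $\tilde\zeta\notin[K]$. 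Therefore the supremum over $Q'_{X^T}$ is at most $\alpha$.

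\textbf{Step 3: MD error.} By the same invariance, the inner supremum in \eqref{Eq:MD error} is $1-\mathbbm{1}\{f(X^T)=\vton(\zeta^T)\}$, so $\beta_1^f=1-P^*(f(X^T)=\vton(\zeta^T))$. Under the in-cluster branch with label $k$, the generated $X^T$ lies in cluster $k$ almost surely, so a match occurs with probability $q_k\wedge\alpha$ for each $k$. Under overflow, a match never occurs. This gives $\beta_1^f=1-\sum_k(q_k\wedge\alpha)=\sum_k(q_k-\alpha)_+=\beta_1^{f,*}(\epsilon=0)$, which matches the lower bound of Theorem \ref{thm: min MD error}.

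The main subtlety we expect is Step 2. There one must be careful that the FA constraint holds \emph{uniformly over all} $Q'_{X^T}$, even though $P^*_{\zeta^T}$ is built from the model's $Q$. It works because each cluster label has auxiliary mass at most $\alpha$ irrespective of the text distribution; this is exactly the role of the truncation and the overflow state. A secondary point is to confirm that robustness, i.e.\ the supremum over $\calB_f$, costs nothing. This holds because $\gamma$ is constant on equivalence classes, so the supremum and infimum over $\calB_f$ coincide with the unattacked value.
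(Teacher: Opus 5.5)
Your proposal is correct and follows essentially the same route as the paper. Both arguments use that $\gamma$ is constant on each class $\calB_f(x^T)$ to bound the worst-case FA error by the per-cluster auxiliary mass $q_k\wedge\alpha\le\alpha$, uniformly over the human distribution by decomposing it into point masses. Both then compute the MD error as the overflow mass $\sum_k(q_k-\alpha)_+$ and invoke the lower bound of Theorem~\ref{thm: min MD error}. The one real addition is your Step 1, which explicitly checks that the two-branch construction has marginal $Q_{X^T}$ and handles the zero-mass branches; the paper's proof takes distortion-freeness for granted rather than verifying it.
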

\addtocounter{theorem}{-1}
\endgroup

\begin{proof}
    Under a detector $\gamma\in\Gamma^*_f$ and the corresponding watermarking method $P_{X^T,\zeta^T}^{*}$,
the induced MD and worst-case FA errors are given by:

\textbf{Worst-case FA error:}
\begin{align}
     \because & \forall y_1^T\in\calV^T, \quad  \bbE_{P_{\zeta^T}^*}\left[\sup_{\tilx^T\in \calB_f(y_1^T)}\mathbbm{1}\{\gamma(\tilx^T,\zeta^T )=1\} \right] \\ 
    &=\sum_{\zeta^T}P_{\zeta^T}^*(\zeta^T) \sup_{\tilx^T\in \calB_f(y_1^T))}\mathbbm{1}\{\gamma(\tilx^T,\zeta^T )=1\}\\
    &=\bigg(\sum_{x^T\in\mathcal{V}^T:f(x^T)=f(y^T)} Q_{X^T}(x^T) \bigg) \wedge \alpha \leq \alpha \\
    & \!\!\!\! \text{and since any distribution $Q_{X^T}$ can be written as a linear combinations of $\delta_{y_1^T}$,} \nn\\
    \therefore &\sup_{Q_{X^T}}\bbE_{Q_{X^T}\otimes P_{\zeta^T}^*}\left[\sup_{\tilx^T\in \calB_f(X^T)}\mathbbm{1}\{\gamma(\tilx^T,\zeta^T )=1\} \right]\leq \alpha.
\end{align}
\textbf{MD error:}
\begin{align}
    &\bbE_{P^*_{X^T,\zeta^T}}\left[\sup_{\tilx^T\in \calB_f(X^T)}\mathbbm{1}\{\gamma(\tilx^T,\zeta^T )=0\} \right]\\
    &=\sum_{x^T}P_{X^T,\zeta^T}^*(x^T,\vton^{-1}(\tilde{\zeta})) +\underbrace{\sum_{k\in[K]}\sum_{x^T:f(x^T)=k}\sum_{\zeta^T:\vton(\zeta^T)\ne\tilde{\zeta}}P_{X^T,\zeta^T}^*(x^T,\zeta^T) \sup_{\tilx^T\in \calB(k)}\mathbbm{1}\{\gamma(\tilx^T,\zeta^T )=0\}}_{=0}\\
    &=\sum_{k\in[K]}\bigg(\Big(\sum_{x^T\in\calB(k)}Q_{X^T}(x^T)\Big)- \alpha \bigg)_+=\beta_1^{f,*}.
\end{align}
The optimality is thus proved.
\end{proof}

\section{Examples of Watermarked Text}
\label{app:watermarked_examples}

In this section, we provide representative examples of watermarked text generated by PASA, as shown in Table~\ref{tab:watermarked_examples}. Each example consists of an input prompt and its corresponding watermarked continuation. These examples cover diverse domains, including wildlife conservation, public policy, technology, history, entertainment, travel, and news reporting. They demonstrate that PASA can generate natural, fluent, and semantically coherent continuations while maintaining high text quality and successfully embedding watermark signals.

\begin{table*}[!t]
\centering
\caption{Examples of watermarked text generated by PASA across diverse domains.}
\label{tab:watermarked_examples}
\begingroup
\scriptsize
\setlength{\tabcolsep}{4pt}
\renewcommand{\arraystretch}{1.16}
\setlength{\emergencystretch}{2em}
\begin{tabularx}{\textwidth}{
@{}
>{\RaggedRight\arraybackslash}p{0.15\textwidth}
>{\RaggedRight\arraybackslash}p{0.29\textwidth}
>{\RaggedRight\arraybackslash}X
@{}}
\toprule
\textbf{Domain} & \textbf{Prompt} & \textbf{Watermarked Text} \\
\midrule

\textbf{Wildlife Conservation}
&
\textbf{Example 1.} Back in 1982, the Endangered Species Act took the ocelot under its protection. The ocelot is a vulnerable creature, susceptible to habitat changes like roads, agriculture, housing developments and trapping.
&
Recently, Wild Earth Guardians made another effort to increase the protection of the ocelot. The Kenedy Ranch, a Texas ranch, located in the Lower Rio Grande Valley, was set to build a huge housing development that would have been fatal to the ocelots who made their home there. When Wild Earth Guardians protested, the Kenedy Ranch switched over to renting out cabin spaces for money instead. The buildings Ocelots like to live in would now be used by human construction contractors as dormitories during the construction process. This change of heart benefited both parties and gave the wildlife a fighting chance. Even though it is a small step, the additional protection of the ocelot definitely starts the improvement of endangered species.
\\
\midrule

\textbf{Public Policy}
&
\textbf{Example 2.} On Jan. 15, nine of 12 members of the National Park System Advisory Board sent a letter to Interior Secretary Ryan Zinke announcing their resignation, effectively dissolving the board. It was an act of protest against an administration with little appetite for the methodical approach the board has brought to national park management for decades.
&
In the 100-plus-year history of the National Park Service, the board is an innovation --- an organization that provides a forum for professionals not from park schools, and that is not beholden to the whim of the NPS, the inmates, as it were, governing the asylum. In an age when bureaucracies seem beholden to the mood of the mob, the board's mission to resist ideology and opinion and secure methodological governance is no small thing. To keep the parks from devolving to the state of mood and opinion, the board's work is extraordinarily important. But the NPS --- and the board that has served it so well --- faces precarious challenges. A perennially at-risk budget, climate changes, shifting demographics and a lack of trained personnel are demanding resistance on the board's part.
\\
\midrule

\textbf{Technology}
&
\textbf{Example 3.} It's amazing what a difference time can make in the way we perceive certain technologies. A few years ago, the idea of the cloud was terrifying to many people.
&
It was a foreign concept to say the least. And open source software was something used by people in dangerous areas of the world (or at school) so it wasn't exactly a ``threat'' to the .com bubble in the early 2000's, but it wasn't exactly the shiniest of jewels in the IT world.\par These terms hovered around the peripheries of IT until recently, and now they're front and center. Though the stigmas have gone away, that doesn't mean the underlying realities have changed. In fact, over time the way we've looked at cloud computing has changed tremendously, and a lot of that has to do with open source software (which is not to be confused with part of the cloud).
\\
\midrule

\textbf{History}
&
\textbf{Example 4.} Readers may be familiar with the fundamental changes that took place in the Roman world as it converted from paganism to Christianity in the fourth century, and as its emperors sought to govern, through the turbulent times of the fifth to seventh centuries, as Christian rulers. This is the stuff of late antiquity as it would be recognised in any classics or history university department.
&
For classical archaeologists, the fifth century is a time when the writing on tombstones disappeared (at least partly because the names, place names, and details of public offices on the tombstones were all interpreted as carrying pagan cues in a church where it was assumed that everyone was a Christian), when tombs could be massively constructed and decorated with biblical or lengthy Christian inscriptions (in contrast to the short, formulaic pagan inscriptions), and when gravestones bore the Latin cross, or a gibbet held by two feet, a symbol of martyrdom. A rhetorically `Origenean' fifth century fountain-of-youth discourse is emblematic of the spiritual and physical pain of the human body, and the re-shaping of landscape around the Christian cult of relics.
\\
\midrule

\textbf{Entertainment}
&
\textbf{Example 5.} A large, brand-new movie studio in Palm Beach County already has a potential box-office hit. Twentieth Century Fox will film interior scenes for Speed II, sequel to the popular 1994 action movie, in the 20,000-square-foot sound stage at Palm Beach Ocean Studios in West Palm Beach, the studio's chief executive said on Thursday.
&
The location will substitute for the Chicago interior of the movie's headquarters of the fictitious Inner City Transportation System, said Robert J. Stein, chief executive of the studio at the former South Florida Video Arts Institute, 2721 Vista Parkway. The movie, due out in 1997, is the first big-budget film production attracted to the 10-month-old studio, which has beefed up its facilities since opening in May and offers one of the most modern designs of any studio in the country. ``It wasn't a very exciting utilization of the studio up to now,'' Stein said. ``And then it's a very exciting utilization.'' The studio also hopes to land more projects in the future. It has lined up a call sheet, or schedule of local props and actors to be used if additional portions of the movie are filmed in the area.
\\
\midrule

\textbf{Travel}
&
\textbf{Example 6.} As the crow flies Denniff’s Cottage on Ringarogey island in Roaring Water Bay is just a few minutes’ boat ride to Baltimore, Co Cork. In contrast, the scenic route is by car, a journey time around the winding roads, of about 20 minutes.
&
Either way, Denniff’s Cottage is a perfect stopover base for a few nights, and with just a 10-15 minute walk the convenience and peaceful setting of its location on the island, truly make it a gem indeed. The cottage is in an enviable situation with direct access to the seafront and a small beach, and the en-suite bedrooms have simply sensational sea views. The stylish property is well furnished and decorated, and the spacious open plan kitchen/dining/living area is perfect for relaxing and socialising. There is a stove for cosy evenings, and outside there are a variety of places to sit and watch the sun go down over Baltimore. Denniff’s Cottage is just a 10-15 minute walk from the village of Baltimore.
\\
\midrule

\textbf{News Reporting}
&
\textbf{Example 7.} BAGHDAD, Iraq -- Fighting in the Shiite stronghold of Sadr City killed 23 Iraqis yesterday, hospital officials said, and the U.S. military reported five troop deaths, as April showed signs of becoming the worst month for U.S. forces in Iraq since September. At least 11 of the Iraqi deaths occurred when mortar shells landed in residential neighborhoods.
&
Two Iraqi police commanders and about 100 of their officers deserted their posts across the western city of Fallujah early yesterday, police said. The police said some of the deserting officers were in the past day or so arrested by American troops in a net tightening around Fallujah. The U.S. military has been blocking access to the city amidst increasing reports of fighting. Black smoke rose from a market area in the southern city of Hilla, and Fallujah residents said at least one neighborhood there was on fire. Fallujah hospitals reported six deaths and 100 wounded over the past day, an official at Fallujah General Hospital said. In Baghdad, gunmen in a car fatally shot Abu Kawthar, a Shiite Muslim cleric, along with his guard, in the Jihad neighborhood, police said.
\\
\bottomrule
\end{tabularx}
\endgroup
\end{table*}
\clearpage
\section{Implementation Details.}
\label{app: implementation_details}

\paragraph{Hyperparameters.}
During watermarked text generation, we employ multinomial sampling (\texttt{top\_p}=1.0) with a fixed temperature $\tau=1.0$. 
The watermark embedding process initiates after the first three precursor tokens. 
For the specific hyperparameters of PASA, we set the semantic cluster number to $K=4$, the synchronization window size to $w=3$, and the FA threshold to $\alpha=0.4$. 
Unless otherwise stated, the length of generated text is constrained to the range of 200 to 300 tokens. 
These configurations were empirically selected via ablation studies to optimize the trade-off between robustness and detectability. All experiments were conducted using a single NVIDIA RTX PRO 6000 GPU.

\paragraph{DIPPER Attack Configurations.}
To evaluate robustness against sophisticated paraphrasing, we utilize the DIPPER model with three escalating intensities. These are defined by varying the Lexical Diversity (L) and Word Order Diversity (O) parameters as follows:
\begin{itemize}[leftmargin=*,topsep=2pt,itemsep=1pt,parsep=0pt,partopsep=0pt]
    \item \textbf{Level 1 (Lexical Substitution)}: (L=60,O=0), focusing on heavy synonymous replacement.
    \item \textbf{Level 2 (Moderate Reordering)}: (L=60,O=20), combining lexical changes with moderate structural shifts.
    \item \textbf{Level 3 (Syntactic Restructuring)}: (L=60,O=80), representing aggressive syntactic modifications.
\end{itemize}
We evaluate PASA across a broad spectrum of semantic-invariant attacks, including DIPPER paraphrasing. Beyond standard configurations, we test escalating intensities such as (L=60, O=40) and (L=60, O=60) to characterize performance under aggressive syntactic and lexical restructuring.